\newcommand{\gf}{{\mathbb{F}}}
\newcommand{\ffm}{{\mathcal{M}}}
\newtheorem{theorem}{Theorem}
\newtheorem{lemma}[theorem]{Lemma}
\newtheorem{corollary}[theorem]{Corollary}
\newtheorem{definition}{Definition}
\newtheorem{example}{Example}
\newtheorem{remark}{Remark}
\begin{document}
\title{The Differential Spectrum of the Power  \\Mapping $x^{p^n-3}$}
\date{\today}
\author{Haode Yan, Yongbo Xia, Chunlei Li, Tor Helleseth, Maosheng Xiong and Jinquan Luo
\thanks{
H. Yan was partially supported by the National Natural
Science Foundation of China (NSFC) under Grant 11801468. Y. Xia was supported in part by NSFC under Grant 61771021, and in part by the Fundamental Research Funds for the Central Universities, South-Central University for Nationalities under Grant CZT20023. C. Li and T. Helleseth were supported by the Research Council of Norway under Grants 247742 and 311646.   M. Xiong was supported by RGC, Hong Kong, under
Grant N$\_$HKUST619$/$17. J. Luo was supported by NSFC under
Grant 11471008. (\emph{Corresponding author:
Yongbo Xia}).

H. Yan is with the School of Mathematics, Southwest Jiaotong University,
Chengdu 610031, China (e-mail: hdyan@swjtu.edu.cn).

Y. Xia is with the Department of Mathematics and Statistics, South-Central University
			for Nationalities, Wuhan 430074, China, and also with the Hubei Key Laboratory of Intelligent Wireless Communications,
  South-Central University for Nationalities, Wuhan 430074, China (e-mail: xia@mail.scuec.edu.cn).

C. Li and T. Helleseth are with the Department of Informatics, University of Bergen, N-5020 Bergen, Norway (e-mail: chunlei.li@uib.no, tor.helleseth@uib.no).

M. Xiong is with the Department of Mathematics,
The Hong Kong University of Science and Technology, Hong Kong (e-mail:
mamsxiong@ust.hk).

J. Luo is with the Hubei Key Laboratory of Mathematical Sciences,
School of Mathematics and Statistics, Central China Normal University,
Wuhan 430079, China (E-mail: luojinquan@mail.ccnu.edu.cn).

 }
}
\maketitle

\begin{abstract}
Let $n$ be a positive integer and  $p$ a prime.	The  power mapping  $x^{p^n-3}$ over $\gf_{p^n}$ has desirable differential properties,  and its  differential spectra for $p=2,\,3$ have been determined. In this paper, for any odd prime $p$, by investigating certain quadratic character sums and some equations over $\gf_{p^n}$,  we determine the differential spectrum of $x^{p^n-3}$ with a unified approach. The obtained result shows that for any given odd prime $p$, the differential spectrum can be expressed explicitly in terms of $n$. Compared with previous results,  a special elliptic curve over $\gf_{p}$ plays an important role in our computation for the general case $p \ge 5$.  \end{abstract}

\begin{IEEEkeywords}
Power mapping, Differential cryptanalysis, Differential spectrum, Quadratic character sum,  Elliptic curve.
\end{IEEEkeywords}

\section{Introduction}
Let $\gf_{p^n}$ be the finite field with $p^n$ elements and $\gf_{p^n}^*=\gf_{p^n}\setminus \{0\}$, where $p$ is a prime number and $n$ is a positive integer.
Let $F(x)$ be a function from  $\gf_{p^n}$ to itself. The \textit{derivative function}, denoted by $\mathbb{D}_aF$, of $F(x)$ at an element $a$ in $\gf_{p^n}$ is given  by
\[\mathbb{D}_aF(x)=F(x+a)-F(x).\]
For any $a,\,b \in \gf_{p^n}$,  let  \[\delta_F(a,b)=|\{x \in \gf_{p^n}| ~\mathbb{D}_{a}F(x)=b\}|,\]
where $|S|$ denotes the cardinality of a set $S$,
and define
\[\delta(F)=\max\limits_{a \in \gf_{p^n}^*}\max\limits_{b \in \gf_{p^n}}  \delta_F(a,b).\]
A function $F$ is said to be \emph{differentially $\delta$-uniform} iff $\delta(F)=\delta$, and $\delta$ is called the \emph{differential uniformity} of $F(x)$ accordingly \cite{Nyberg93}.
The \emph{differential spectrum} of $F(x)$  is defined as the multiset
\begin{equation*}
\{\,
\delta_F(a,b)\,:\, a \in \gf_{p^n}^*,\, b\in \gf_{p^n}
\,\}.
\end{equation*}
When $F(x)$ is  a power mapping, i.e.,  $F(x)=x^d$ for {a positive integer} $d$,  one easily sees that $\delta_F(a,b)=\delta_F(1,{b/{a^d}})$  for all $a\in \gf_{p^n}^*$ and $b\in \gf_{p^n}$.
That is to say,  the differential spectrum of $F(x)$ is completely determined by the values of $\delta_F(1,b)$ as $b$ runs through $\gf_{p^n}$.
Therefore,  the differential spectrum of a power mapping  can be simplified  as follows.

\begin{definition}\label{def1}
Assume that a power function $F(x)=x^d$ over $\gf_{p^n}$ has differential uniformity $\delta$  and denote
\[\omega_i=|\left\{b\in \gf_{p^n}\mid \delta_F(1, b)=i\right\}|,\,\,0\leq i\leq \delta.\]
The differential spectrum of $F$ is simply defined to be
an ordered sequence
\[
%\mathbb{S}=\left\{(i: \omega_i)\,| \, 0\leq i \leq \delta\right\}.
\mathbb{S} = [\omega_0, \omega_1, \ldots, \omega_{\delta}].
\]
\end{definition}
\smallskip
Due to the differential cryptanalysis \cite{BiSh},
the differential property is one of the most fundamental parameters of cryptographic primitives in block ciphers.
Consequently, it is highly desirable that nonlinear functions for cryptographic applications have low differential uniformity. For example, the AES (Advanced Encryption Standard) uses the inverse function $x \mapsto x^{-1}$ over $\gf_{2^n}$,
which has differential uniformity $4$ for even $n$ and $2$ for odd $n$.
Besides the differential uniformity, the differential spectrum of a nonlinear function also reflects its differential property.
It is usually taken into consideration when one assesses the resistance of a function against differential cryptanalysis and its variants \cite{BCC,BCC2,BP}. Moreover, the differential spectrum of a nonlinear function is also related to the nonlinearity of the function, which is
an important parameter of a function
with respect to linear cryptanalysis \cite{CarletIT18, Charpinpeng19,Matsui}.
\smallskip

In addition to its importance in cryptography, the differential spectrum of a nonlinear function also plays a significant role in sequences, coding theory and combinatorial design. In sequences, the differential spectrum of a power mapping can be used to determine the cross-correlation between $m$-sequences and their decimation sequences \cite{Dobbertin2001};
in coding theory, the differential spectrum is highly related to the number of low weight codewords in some linear codes \cite{BCC, Carlet, C-PAMC}; and in combinatorial designs, some new $2$-designs can be constructed from  differentially two-valued functions \cite{tangding}.
Therefore,  it is an interesting topic to completely determine the differential spectrum of a nonlinear function with low differential uniformity.
This problem is, nevertheless, relatively challenging.
So far, only a few infinite families of power mappings have known differential spectra, which are listed in Table \ref{table-1}.

\smallskip

The investigation of differential spectra of power mappings over finite fields, to the best of our knowledge, first appeared in \cite{Dobbertin2001}, where the authors
considered the differential spectrum of $x^d$ over $\gf_{3^n}$ with odd $n$ and $d=2\cdot 3^{\frac{n-1}{2}}+1$ (known as the ternary Welch exponent). The result obtained there was then used to resolve the ternary Welch conjecture that the cross-correlation function between an $m$-sequence of period $3^n-1$ and its ternary Welch-decimated sequence takes exactly three values.
Blondeau, Canteaut and Charpin later in \cite{BCC} dedicated their research focus to the differential spectra of several power mappings in the binary case, including quadratic power mappings, Bracken-Leander power mapping and Kasami power mapping, and they proposed some conjectures.
The differential properties of the power mappings $x^{2^{t}-1}$ over $\gf_{2^n}$ were later investigated in \cite{BCC2} and \cite{BP}, where
the differential spectra of  $x^{2^{t}-1}$ for certain special $t$'s were determined.
Xiong et al. in \cite{XYY} proved one of the conjectures in \cite{BCC} about the differential spectra of the power functions with Niho exponents.
Very recently, for the power mapping $x^{2^{3k}+2^{2k}+2^k-1}$ over $\gf_{2^n}$ with $n=4k$,  Li et al. \cite{Li} determined its differential spectrum, which gives an affirmative answer to the conjecture proposed in \cite{Lilya}.
In recent years some research progress has also been made for the nonbinary cases.
Choi et. al  \cite{CHNC} computed  the differential spectra of two  power functions $x^{\frac{p^k+1}{2}}$ and $x^{\frac{p^n+1}{p^m+1}+\frac{p^n-1}{2}}$, where the conditions on $p, n, k, m$ are listed in Table \ref{table-1}.
The differential spectra of the family of $p$-ary Kasami power permutation $x^{p^{2k}-p^k+1}$ over $\gf_{p^n}$ with $\gcd(n,k)=1$ and its generalized family with $\gcd(n,k)=e$ were investigated in \cite{Yan} and \cite{Lei2020}, respectively.

\smallskip

Our study in this paper originates from  the work of Helleseth,  Rong and Sandberg \cite{HRS}, where they intensively studied the differential properties of a number of  power functions and presented several families of  APN functions.
In particular, the differential properties of the power function $x^{p^n-3}$ were characterized as follows.

   \begin{theorem}\cite[Theorem 7]{HRS}\label{hellresult} Let $d=p^n-3$   and let $F(x)=x^d$  be a mapping over    $\gf_{p^n}$.

 \noindent  (\textrm{i}) If $p=2$, then $\delta(F)=2$    when $n$ is odd and $\delta(F)=4$  when $n$ is even.

    \noindent  (\textrm{ii})  If $p$ is an odd prime, then $1\leq \delta(F)\leq 5$.

   \noindent     (\textrm{iii})  If $n>1$ is odd  and $p=3$,  then $\delta(F)=2$.
   \end{theorem}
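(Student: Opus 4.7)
My plan is to reduce the differential equation to a polynomial equation whose roots can be counted. By the homogeneity $\delta_F(a,b)=\delta_F(1,b/a^d)$ of power mappings, I may take $a=1$. For $x\in\gf_{p^n}\setminus\{0,-1\}$, using $y^{p^n-3}=y^{-2}$ on $\gf_{p^n}^{\ast}$ rewrites $(x+1)^{p^n-3}-x^{p^n-3}=b$ as
\[
  b\,x^{2}(x+1)^{2} \;=\; -(2x+1).
\]
The boundary points $x=0$ and $x=-1$ contribute a solution precisely when $b=1$ and $b=-1$, respectively.

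For part (i) with $p=2$, the equation reduces to $b(x^{2}+x)^{2}=1$; since squaring is bijective in characteristic $2$, this is the Artin--Schreier equation $x^{2}+x=b^{-1/2}$, which has $0$ or $2$ solutions according to $\tr(b^{-1/2})$. Combined with the boundary case $b=1$ (which contributes both $x=0$ and $x=1$), this yields $\delta(F)=4$ for $n$ even and $\delta(F)=2$ for $n$ odd. For part (ii) with $p$ odd, the displayed equation is a quartic in $x$ for $b\neq 0$, with at most $4$ solutions, and at most one boundary point can contribute (since $1\neq -1$), so $\delta(F)\le 5$.

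The heart of the argument lies in part (iii). Substituting $u=x(x+1)$ sets up a bijection between non-boundary solutions and the set of $u\in\gf_{3^{n}}$ satisfying $b^{2}u^{4}-u-1=0$ (using $4\equiv 1\pmod 3$; the inverse map is $x=bu^{2}+1$). Since $\gcd(3,3^{n}-1)=1$, the cube root $b^{1/3}$ exists uniquely in $\gf_{3^{n}}$, so setting $c=b^{2/3}$ and $u=v/c$ transforms the quartic into $v^{4}-v=c$, whose discriminant equals $-27+256(-c)^{3}\equiv -c^{3}\pmod 3$. Because $c=(b^{1/3})^{2}$ is automatically a square, while $-1$ is a non-square in $\gf_{3^{n}}$ for odd $n$ (as $3^{n}\equiv 3 \pmod 4$), the discriminant $-c^{3}$ is a non-square. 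Consequently, the cyclic Galois group of the splitting field of $v^{4}-v-c$ over $\gf_{3^{n}}$ is not contained in $A_{4}$; its generator must have cycle type $2+1+1$ or $4$ on the four roots, yielding at most $2$ roots in $\gf_{3^{n}}$. The exceptional case $b=\pm1$ (where $c=1\in\gf_{3}$) is handled directly: $v^{4}-v-1$ is irreducible over $\gf_{3}$, and since $\gcd(4,n)=1$ for $n$ odd it remains irreducible over $\gf_{3^{n}}$, contributing no roots, so together with the single boundary point one still has at most $2$ solutions. The matching lower bound $\delta(F)\ge 2$ is then obtained by exhibiting a specific $b$ that achieves two solutions.

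The main obstacle is the refined bound in part (iii): the naive quartic degree only gives $\delta(F)\le 5$, and sharpening it to $\le 2$ depends on the interplay between $c=b^{2/3}$ being automatically a square in $\gf_{3^{n}}$ and $-1$ being a non-square there for $n$ odd.
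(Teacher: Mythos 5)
Your argument is essentially correct, but note that the paper does not prove this statement at all: Theorem \ref{hellresult} is quoted verbatim from \cite[Theorem 7]{HRS}, so the only meaningful comparison is with the machinery the paper builds around the same quartic. There your route genuinely diverges. The paper (and \cite{HRS}) works with $g_b(x)=x^4+2x^3+x^2+2b^{-1}x+b^{-1}$, depresses it to $y^4-2y^2+16b^{-1}y+1$, and controls the root count by factoring into two quadratics and tracking quadratic characters of their discriminants (Lemma \ref{A-and-B}); for $p=3$ the paper's Corollary \ref{p=3ds} recovers part (iii) as a by-product of the full spectrum. You instead substitute $u=x(x+1)$, reduce to $v^4-v=c$ with $c=b^{2/3}$, and invoke Stickelberger/parity: the discriminant $-c^3$ is a nonsquare for $n$ odd because $c$ is forced to be a square and $\eta(-1)=-1$, so Frobenius acts as an odd permutation on the four roots and fixes at most two. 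This is slicker and more conceptual for (iii) (and your checks are right: $v^4-v-c$ is separable for $c\neq 0$, the $u\leftrightarrow x$ correspondence is a genuine bijection with inverse $x=bu^2+1$, and the exceptional $b=\pm1$ case, where the boundary point would otherwise push the count to $3$, is correctly killed by the irreducibility of $v^4-v-1$ over $\gf_3$ together with $\gcd(4,n)=1$); the paper's quadratic-factorization approach buys finer information (exactly which $b$ give $0$, $2$, $4$ roots) that is needed for the full differential spectrum but is overkill for the uniformity bound alone.

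Two small points you should still close. First, in (iii) you assert the lower bound $\delta(F)\ge 2$ by ``exhibiting a specific $b$'' but never do so; the clean fix is the counting identity $\sum_b N(b)=3^n$ together with $N(0)=N(1)=N(-1)=1$ and the fact (which your own parity argument gives) that $N(b)\in\{0,2\}$ for $b\neq 0,\pm1$, forcing some $N(b)=2$ once $3^n>3$. Second, the degenerate value $b=0$ (where the quartic collapses and $N(0)=1$) should be mentioned explicitly, since your discriminant argument assumes $c\neq 0$. Neither is a structural flaw.
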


Given Theorem \ref{hellresult}, a natural question arises: what is the differential spectrum of the power mapping $x^{p^n-3}$ over $\gf_{p^n}$? There are some partial answers to this question. By setting $\frac{1}{0}=0$,  the above power mapping can be rewritten as  $F(x)=x^{-2}$.  When $p=2$, it is equivalent to the inverse function $x^{-1}$ over  $\gf_{2^n}$, of which the differential spectrum has been determined in \cite{BCC}. Recently, for $p=3$  the differential spectrum of $F(x)=x^{p^n-3}$ was completely determined in \cite{XZLH}, where the authors characterized the conditions on $b$ such that the derivative equation $\mathbb{D}_1F(x)=F(x+1)-F(x)=b$ has two and four roots in $\mathbb{F}_{3^n}$, respectively. The method used in \cite{XZLH} relies heavily on the characteristic $p=3$,  and  it is not clear how it may work for the general prime $p$.

\smallskip

   In this paper, for any odd prime $p$, we present a unified approach to studying the differential spectrum of  $x^{p^n-3}$, which is different from that used in \cite{XZLH}. In our approach, we investigate several related equations in details and
establish a connection between the differential spectrum of $x^{p^n-3}$ and the quadratic character sums with two quartic polynomials.  For the case $p=3$, the two quartic polynomials are essentially quadratic ones and hence the two quadratic character sums  can be evaluated directly; when $p \ge 5$, both of the quadratic character sums are related to a single elliptic curve over $\gf_{p}$, and they can be computed by the theory of elliptic curves. As a result, for any given odd prime $p$, the differential spectrum of $x^{p^n-3}$ can be derived and be expressed explicitly in terms of $n$.  Therefore, our work completely settles the unsolved problem about the differential spectrum of  $x^{p^n-3}$ in Theorem 7 of \cite{HRS}.

  \smallskip
The rest of this paper is organized as follows. Section \ref{pre} introduces some quadratic character sums and the related theory of elliptic curves over $\gf_{p^n}$. In Section \ref{sec3}, we will determine the number of solutions to an equation system, which is dependent on a quadratic character sum presented in Section \ref{pre}.  With the preparations in Sections \ref{pre} and \ref{sec3}, the differential spectrum of $x^{p^n-3}$ is  computed in Section \ref{sec4}. Section \ref{sec5} concludes this paper.
\begin{table}[t]\label{table-1}
\caption{Some power functions $F(x)=x^d$ over $\gf_{p^n}$ with known  differential spectrum}
\centering
\begin{tabular}{|c||c|c|c|c|}
\hline
%inserts double horizontal lines
$p$&$d$ & Condition & $\delta(F)$ & Ref. \\
[0.5ex]
\hline
% inserts single horizontal line
2& $2^t+1$& $\gcd(t,n)=s$ & $2^s$ &\cite{BCC}\\
\hline
2& $2^{2t}-2^t+1$& $\gcd(t,n)=s$, $n/s$ odd &$2^s$ &\cite{BCC}\\
\hline
2&$2^n-2$ &$n\geq 2$ & $2$ or $4$ &\cite{BCC}\\
\hline
%2&$2^{2k}+2^k+1$ &$n=4k$, $k$ odd & 4 &\cite{BCC}\\
%\hline
2&$2^{2k}+2^k+1$ &$n=4k$ & 4 &\cite{BCC}, \cite{XY}\\
\hline
2&$2^t-1$ &$t=3,n-2$ & 6 or 8  &\cite{BCC2}\\
\hline
2&$2^t-1$ &$t=n/2, n/2+1$, $n$ even & $2^{n/2}-2$ or $2^{n/2}$  &\cite{BCC2}\\
\hline
2&$2^t-1$ &$t=(n-1)/2,(n+3)/2$, $n$ odd & $6$ or $8$ &\cite{BP}\\
\hline
2&$2^m+2^{(m+1)/2}+1$ &$n=2m$, $m\geq5$ odd &$8$ &\cite{XYY}\\
\hline
2&$2^{m+1}+3$ &$n=2m$, $m\geq5$ odd &$8$ &\cite{XYY}\\
\hline
2&$2^{3k}+2^{2k}+2^{k}-1$ & $n=4k$ & $2^{2k}$ & \cite{Li} \\
\hline
  $3$&$2\cdot 3^{(n-1)/2}+1$&$n$ odd &$4$&\cite{Dobbertin2001}\\
\hline
$3$&$3^n-3$ &$n$ odd, $n\equiv 2(\mathrm{mod}~4)$, or $n\equiv 0(\mathrm{mod}~4)$ &$2,4,$ or $5$ &\cite{XZLH}\\
\hline
$p$ odd &$(p^k+1)/2$&$e=\gcd(n,k)$&$(p^e-1)/2$ or $p^e+1$&\cite{CHNC}\\
\hline
$p$ odd&$(p^n+1)/(p^m+1)+(p^n-1)/2$&$p \equiv 3 ~(\mathrm{mod}~4)$, $n$ odd, $m|n$&$(p^m+1)/2$&\cite{CHNC}\\
\hline
%$p$ odd&$p^{2k}-p^k+1$ &$n$ odd, $\gcd(n,k)=1$ &$p+1$ &\cite{Yan}\\
%\hline
$p$ odd&$p^{2k}-p^k+1$ &$\gcd(n,k)=e$, $n/e$ odd,  &$p^e+1$ &\cite{Yan}, \cite{Lei2020}\\
\hline

\end{tabular}
%\label{table:1}
\end{table}

\section{Some quadratic sums and the theory of elliptic curves}\label{pre}
From now on, we always assume that $p$ is an odd prime and $\eta$ is the quadratic multiplicative character of $\gf_{p^n}^*$.  It is convenient to extend the definition of $\eta$ to $\gf_{p^n}$ by setting $\eta(0)=0$. For an element $\beta\in \gf_{p^n}$, if $\eta(\beta)=1$, then it has exactly two square roots in $\gf_{p^n}$, which are denoted by $\pm\sqrt{\beta}$ throughout this paper. In the sequel, for convenience we also frequently adopt the convention that $\frac{1}{0}:=0$.

\smallskip

Let $\gf_{p^n}[x]$ denote the polynomial
ring over $\gf_{p^n}$. We shall consider the sums involving the quadratic character and having polynomial arguments of the form
\begin{eqnarray*} \label{2:cha-f} \sum_{x\in\gf_{p^n}}\eta(f(x))\end{eqnarray*}
 with $f(x)\in\gf_{p^n}[x]$.
It is clear that the case of linear $f(x)$ is trivial. When $f(x)$ is quadratic, the explicit formula was given in \cite{FF}.
\begin{lemma} \cite[Theorem 5.48]{FF}\label{charactersumquadratic} Let $f(x)=a_2x^2+a_1x+a_0\in\gf_{p^n}[x]$ with $p$ odd and $a_2\neq0$. Put $d=a^2_1-4a_0a_2$ and let $\eta$ be the quadratic character of $\gf_{p^n}$. Then
\begin{eqnarray*}
\sum_{x\in\gf_{p^n}}\eta(f(x))=
\begin{cases}
	-\eta(a_2), & \text{ if } d\neq 0, \\
	(p^n-1)\eta(a_2), & \text{ if } d = 0.
\end{cases}
\end{eqnarray*}
\end{lemma}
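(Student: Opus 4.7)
The plan is to reduce the sum to a canonical form by completing the square and then evaluate the resulting expression using the orthogonality relation $\sum_{x\in\gf_{p^n}}\eta(x)=0$. Since $p$ is odd and $a_2\neq 0$, I would write $f(x)=a_2\bigl(x+\tfrac{a_1}{2a_2}\bigr)^2 - \tfrac{d}{4a_2}$, substitute $y=x+\tfrac{a_1}{2a_2}$ (which is a bijection of $\gf_{p^n}$), and pull out the leading coefficient using the multiplicativity $\eta(a_2u)=\eta(a_2)\eta(u)$ together with $\eta(a_2^2)=1$. This reduces the problem to
\[
\sum_{x\in\gf_{p^n}}\eta(f(x)) \;=\; \eta(a_2)\,T(e), \qquad T(e):=\sum_{y\in\gf_{p^n}}\eta(y^2-e), \qquad e:=\tfrac{d}{4a_2^2},
\]
so the task becomes computing $T(e)$ in the two cases $e=0$ and $e\neq 0$.

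The degenerate case $d=0$ forces $e=0$, and $T(0)=\sum_y\eta(y^2)$ contributes $1$ for every nonzero $y$ and $0$ at $y=0$ (under the convention $\eta(0)=0$), which gives $T(0)=p^n-1$ and hence the second branch of the claim. For $d\neq 0$, so $e\neq 0$, I would count preimages: for each $t\in\gf_{p^n}$ the equation $y^2=t+e$ has exactly $1+\eta(t+e)$ solutions, so reparameterizing the sum by $t=y^2-e$ yields
\[
T(e) \;=\; \sum_{t\in\gf_{p^n}}\eta(t)\bigl(1+\eta(t+e)\bigr) \;=\; \sum_{t\in\gf_{p^n}}\eta\bigl(t(t+e)\bigr),
\]
where the term $\sum_t\eta(t)$ vanishes by orthogonality. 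For $t\neq 0$ the integrand equals $\eta(t^2)\eta(1+e/t)=\eta(1+e/t)$, and the substitution $u=1+e/t$ sends $\gf_{p^n}^*$ bijectively onto $\gf_{p^n}\setminus\{1\}$, collapsing the sum to $\sum_{u\neq 1}\eta(u)=-\eta(1)=-1$. Multiplying by the prefactor $\eta(a_2)$ then gives $-\eta(a_2)$, as required.

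The argument is essentially routine, and the only subtle point I expect to watch for is the bookkeeping at the singular indices $t=0$ and $t+e=0$: one has to verify that the counting formula $1+\eta(t+e)$ (interpreted with $\eta(0)=0$) and the change of variables $u=1+e/t$ agree consistently on the full index set, so that no stray terms are dropped or double-counted when passing between the two summation variables.
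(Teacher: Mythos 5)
Your argument is correct and is essentially the standard proof of this classical result (the paper itself gives no proof, citing Lidl--Niederreiter, Theorem 5.48, whose proof proceeds exactly this way: complete the square, pull out $\eta(a_2)$, and evaluate $\sum_t\eta(t(t+e))$ via the substitution $u=1+e/t$). The bookkeeping at $t=0$ and $t+e=0$ that you flag does check out: the $t=0$ term vanishes under the convention $\eta(0)=0$, and the count $1+\eta(t+e)$ is consistent at $t+e=0$.
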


\smallskip

As it will be seen in Sections \ref{sec3} and \ref{sec4}, the computation of the differential spectrum of the power mapping $x^{p^n-3}$ over $\gf_{p^n}$ boils down to evaluating two specific character sums
\begin{eqnarray}
\label{2:f1} \lambda_{1,p^n}:=\sum_{x\in\gf_{p^n}}\eta\left((x^2-4)(-3x^2-4)\right),\end{eqnarray}
and
\begin{eqnarray}
\label{2:f2} \lambda_{2,p^n}:=\sum\limits_{x\in\gf_{p^n}}\eta\left((x^2+1)(x^2+4x+1)\right).
\end{eqnarray}
Note that in the case of $p=3$ the above character sums can be easily computed. To be more concrete,
one has $-3x^2-4=-4$ and $x^2+4x+1=(x+2)^2$, then the polynomials involved in $\lambda_{1,3^n}$ and $\lambda_{2,3^n}$ are essentially quadratic ones. Hence Lemma \ref{charactersumquadratic} can be applied directly and we have
\begin{eqnarray} \label{pre:p=3}
\lambda_{1,3^n}=-\eta(-1)\,\,\,\,\mbox{and}\,\,\,\, \lambda_{2,3^n}=-1-\eta(2).
\end{eqnarray}

\smallskip

When $p \ge 5$, the situation is quite different. The polynomials involved in $\lambda_{1,p^n}$ and $\lambda_{2,p^n}$ are of degree 4, these character sums correspond to the elliptic curves $y^2=(x^2-4)(-3x^2-4)$ and $y^2=(x^2+1)(x^2+4x+1)$ over  $\gf_p$ respectively. Generally speaking, by the theory of elliptic curves in \cite{SilveEC}, there is no explicit formula for the evaluation of such character sums in general, except for some very special kinds of elliptic curves that are very rare. The following theorem provides an efficient  method to evaluate $\lambda_{1,p^n}$ and $\lambda_{2,p^n}$ for $p \ge 5$ based on the theory of elliptic curves.

\begin{theorem}\label{thm gama} Let $p\geq 5$. Denote by $N_p$ the number of $(x,y) \in \gf_{p}^2$ satisfying the equation
\begin{equation}\label{ec}
E: y^2=x(x-1)(x+3).
\end{equation}
Define $a=N_{p}-p$ and let $\gamma_{p,1}$ and $\gamma_{p,2}$ be the two roots of the quadratic polynomial $T^2+aT+p$ in the complex number field.
Define \begin{equation}\label{keysums} \Gamma_{p,\,n}:=\sum\limits_{x\in\gf_{p^n}}\eta(x(x-1)(x+3)).\end{equation}
Then
\begin{eqnarray} \label{2:eva-cha}
\Gamma_{p,\,n}=-\gamma_{p,1}^n-\gamma_{p,2}^n,\,\,\,\, \lambda_{1,p^n}=\Gamma_{p,\,n}-\eta(-3)\,\,\,\,\mbox{and}\,\,\,\,\lambda_{2,p^n}=\Gamma_{p,\,n}-1.
\end{eqnarray}
\end{theorem}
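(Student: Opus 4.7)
The plan is to verify the three identities in turn: the first via the Hasse--Weil formalism for the elliptic curve $E$, and the other two by reducing to $\Gamma_{p,\,n}$ through carefully chosen substitutions that realize the implicit $\gf_p$-isomorphisms of the associated genus-one curves.

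For the first identity, I would interpret $\Gamma_{p,\,n}$ as the ``error term'' in the affine point count of $E$. Since for each $x\in \gf_{p^n}$ the number of $y\in\gf_{p^n}$ with $y^2=x(x-1)(x+3)$ equals $1+\eta(x(x-1)(x+3))$, the number of affine $\gf_{p^n}$-points of $E$ is $p^n+\Gamma_{p,\,n}$. The characteristic polynomial of the Frobenius endomorphism on $E$ is $T^2-(p+1-\#E(\gf_p))T+p=T^2+aT+p$ with $a=N_p-p$, so $\gamma_{p,1},\gamma_{p,2}$ are the Frobenius eigenvalues; the zeta-function formula then gives $\#E(\gf_{p^n})=p^n+1-\gamma_{p,1}^n-\gamma_{p,2}^n$. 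Removing the unique point at infinity and comparing the two counts yields $\Gamma_{p,\,n}=-\gamma_{p,1}^n-\gamma_{p,2}^n$.

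For $\lambda_{1,p^n}$, I would first apply the bijective change $x=2u$ to rewrite the sum as $\eta(-1)\sum_u\eta\bigl((u^2-1)(3u^2+1)\bigr)$, then parameterize by $v=u^2$ using that the fiber of $u\mapsto u^2$ has size $1+\eta(v)$. The result splits into the quadratic sum $\sum_v \eta((v-1)(3v+1))$, which equals $-\eta(3)$ by Lemma \ref{charactersumquadratic}, and the cubic sum $\sum_v \eta(v(v-1)(3v+1))$. For the latter, I would pull out the factor of $3$ and apply the linear change $v=-w/3$, valid since $p\neq 3$; this sends $v(v-1)(v+1/3)$ to $-\tfrac{1}{27}w(w-1)(w+3)$, and using $\eta(-27)=\eta(-3)$ and $\eta(9)=1$ converts the cubic sum to $\eta(-1)\Gamma_{p,\,n}$. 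Assembling the pieces with the prefactor $\eta(-1)$ produces $\lambda_{1,p^n}=\Gamma_{p,\,n}-\eta(-3)$.

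For $\lambda_{2,p^n}$, I would exploit the palindromic structure of $(x^2+1)(x^2+4x+1)=x^4+4x^3+2x^2+4x+1$, which yields the factorization
\[
(x^2+1)(x^2+4x+1)\;=\;x^2\,(x+1/x)\,(x+1/x+4)\qquad (x\ne 0).
\]
Setting $u=x+1/x$ and analyzing $x^2-ux+1=0$ shows that each $u\in\gf_{p^n}$ has exactly $1+\eta(u^2-4)$ preimages in $\gf_{p^n}^*$. After handling the $x=0$ term separately, the sum over $x\ne 0$ splits into $\sum_u\eta(u(u+4))=-1$ (Lemma \ref{charactersumquadratic}) and $\sum_u\eta(u(u-2)(u+2)(u+4))$. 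In the latter, the dilation $u=2v$ normalizes the four roots to $\{-2,-1,0,1\}$, and the M\"obius transformation $w=3v/(v+2)$ sends these to $\{\infty,-3,0,1\}$, which are the roots of $w(w-1)(w+3)$ (together with $\infty$). A direct computation gives $v(v-1)(v+1)(v+2)=36\,w(w-1)(w+3)/(3-w)^4$, and since $\eta(36)=1$ and $(3-w)^4$ is a square, tracking the excluded pole $v=-2$ (which contributes $0$) and the excluded image $w=3$ (contributing $\eta(36)=1$) converts this quartic sum to $\Gamma_{p,\,n}-1$. Combining everything yields $\lambda_{2,p^n}=\Gamma_{p,\,n}-1$.

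The Hasse--Weil step is standard, and Lemma \ref{charactersumquadratic} disposes of the quadratic pieces. The main technical obstacle is discovering the correct substitutions in the last two parts; for $\lambda_{2,p^n}$ in particular, one must recognize that the curve $y^2=v(v-1)(v+1)(v+2)$ (with roots in arithmetic progression) is $\gf_p$-isomorphic to $E$ via the M\"obius map above, and then carefully account for the correction terms from the ramification points of the $2$-to-$1$ cover $x\mapsto x+1/x$ and from the pole of the M\"obius transformation.
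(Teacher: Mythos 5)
Your proposal is correct: all three identities are established, the Hasse--Weil step matches the paper's, and I checked your two changes of variable (in particular $v(v-1)(v+1)(v+2)=36\,w(w-1)(w+3)/(3-w)^4$ under $v=2w/(3-w)$, and the bookkeeping at the excluded points $v=-2$ and $w=3$) — they work. Your treatment of $\lambda_{1,p^n}$ is essentially the paper's: both count fibers of the squaring map via $1+\eta(\cdot)$ and then apply a degree-one change of variable (the paper uses $u\mapsto 4/u$ on squares where you use $v=-w/3$) to land on $\Gamma_{p,\,n}$ plus a quadratic sum disposed of by Lemma \ref{charactersumquadratic}. For $\lambda_{2,p^n}$ your route genuinely differs: the paper substitutes $u=\frac{x^2+4x+1}{x^2+1}$ and counts fibers of this degree-two rational map via its discriminant $(u+1)(3-u)$, which lands directly on the cubic sum $\sum_u\eta(u(u+1)(3-u))=\Gamma_{p,\,n}$ (after $u\mapsto -u$) with only a few boundary terms to track; you instead use the palindromic substitution $u=x+1/x$, which produces the quartic intermediate $\sum_u\eta(u(u-2)(u+2)(u+4))$ and then requires an explicit M\"obius isomorphism onto $y^2=w(w-1)(w+3)$. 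The paper's choice is slightly more economical (one substitution instead of three, no quartic intermediate), while yours makes the underlying geometry more transparent: it exhibits the genus-one curve behind $\lambda_{2,p^n}$ as $\gf_p$-isomorphic to $E$, rather than absorbing that isomorphism into the choice of fibering. Both arguments are complete and correct.
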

\begin{proof}
The equation (\ref{ec}) defines an elliptic curve $E$ over $\gf_{p}$. The quadratic character sum $\Gamma_{p,n}$ defined in (\ref{keysums}) is closely related to the number of $\gf_{p^n}$-rational points (with the extra point at infinity) on $E$, which is actually equal to
$p^n+1+\Gamma_{p,\,n}$. By the theory of elliptic curves (see \cite[Theorem 2.3.1, Chap. V]{SilveEC}), we have \begin{equation*}\Gamma_{p,n}=-\gamma_{p,1}^n-\gamma_{p,2}^n. \end{equation*}
The Weil bound for $\Gamma_{p,n}$ is that $|\Gamma_{p,\,n}|\leq 2\sqrt{p^n}$ (see \cite[Corollary 1.4, Chap. V]{SilveEC}). Note that $a=\Gamma_{p,1}$, which is an integer. Thus, we have $a^2< 4p$ and $\gamma_{p,1}\neq \gamma_{p,2}$.

Now using $\Gamma_{p,n}$ we can evaluate $\lambda_{1,p^n}$ as follows:
\begin{equation*}
\begin{array}{lcl}
\sum\limits_{x\in\gf_{p^n}}\eta\left((x^2-4)(-3x^2-4)\right)
&=&1+2\sum\limits_{\eta(u)=1}\eta\left((u-4)(-3u-4)\right)\\
&=&1+2\sum\limits_{\eta(u)=1}\eta\left((1-\frac{4}{u})(-3-\frac{4}{u})\right)\\
&=&1+2\sum\limits_{\eta(u)=1}\eta\left((1-u)(-3-u)\right)\\
&=&1+\sum\limits_{u\in\gf_{p^n}}\left(1+\eta(u)\right)\eta\left((u-1)(u+3)\right)-\eta(-3)\\
&=&\sum\limits_{u\in\gf_{p^n}}\eta\left((u-1)(u+3)\right)+\Gamma_{p,\,n}+1-\eta(-3).\\
\end{array}
\end{equation*}
The first term $\sum\limits_{u\in\gf_{p^n}}\eta\left((u-1)(u+3)\right)=-1$ according to Lemma \ref{charactersumquadratic}. Thus we have the desired result for $\lambda_{1,p^n}$.

As for $\lambda_{2,p^n}$, let $\frac{x^2+4x+1}{x^2+1}=u$. Then,  $u$ and $x$ satisfy
\begin{equation}\label{ueqn1}(u-1)x^2-4x+(u-1)=0.
\end{equation}
It is easy to see that $x=0$ if and only if $u=1$.  When $u\neq1$, (\ref{ueqn1}) is a quadratic equation in the variable $x$, and it has solutions in $\gf_{p^n}$ if and only if  $\eta(\Delta)=\eta((u+1)(-u+3))=1$ or $0$.  If $u=-1$ (resp.  $u=3$), then $x=-1$ (resp. $x=1$) is the unique solution of (\ref{ueqn1}). If $u\neq1$ and $\eta((u+1)(-u+3))=1$,  there are two distinct $x$'s satisfying  (\ref{ueqn1}). Thus we have
\begin{equation}\label{ueqn2}
\sum_{x\in\gf_{p^n}}\eta\left(\frac{x^2+4x+1}{x^2+1}\right)=\eta(1)+\eta(-1)+\eta(3)+2\sum_{  u\neq1,\eta((u+1)(-u+3))=1}\eta(u),
\end{equation}
where we may adopt the convention that $\frac{1}{0}:=0$ and $\eta(0)=0$.
Furthermore,  \begin{equation*}\begin{array}{lcl}
&&2 \sum\limits_{u\neq1,\eta((u+1)(-u+3))=1}\eta(u)\\
&=&\sum\limits_{u\neq1}(1+\eta((u+1)(-u+3)))\eta(u)-\eta(-1)-\eta(3)\\
&=&\sum\limits_{u\in\gf_{p^n}}(1+\eta((u+1)(-u+3)))\eta(u)-2\eta(1)-\eta(-1)-\eta(3)\\
&=&\sum\limits_{u\in\gf_{p^n}}\eta(u)+\sum\limits_{u\in\gf_{p^n}}\eta(u(u+1)(-u+3))-2\eta(1)-\eta(-1)-\eta(3)\\
&=&\sum\limits_{u\in\gf_{p^n}}\eta((-u)(-u+1)(u+3))-2\eta(1)-\eta(-1)-\eta(3)\\
&=&\Gamma_{p,n}-2\eta(1)-\eta(-1)-\eta(3),
\end{array}
\end{equation*}
where the fourth equality holds since $\sum\limits_{u\in\gf_{p^n}}\eta(u)=0$. This together with (\ref{ueqn2}) yields
\begin{equation*}\sum_{x\in\gf_{p^n}}\eta\left(\frac{x^2+4x+1}{x^2+1}\right)=\Gamma_{p,\,n}-1.
\end{equation*}
Since \begin{equation*}\lambda_{2,p^n}=\sum_{x\in\gf_{p^n}}\eta\left(\frac{x^2+4x+1}{x^2+1}\right)\eta\left((x^2+1)^2\right)=\sum_{x\in\gf_{p^n}}\eta\left(\frac{x^2+4x+1}{x^2+1}\right),
\end{equation*} it follows the desired evaluation of $\lambda_{2,p^n}$.
\end{proof}

\begin{remark}
We  emphasize that a unified explicit formula of the character sum $\Gamma_{p,n}$ for all primes $p \ge 5$ and positive integers $n$ may not exist at all; and we have the same situation for $\lambda_{1,p^n}$ and $\lambda_{2,p^n}$.
However, Theorem \ref{thm gama} enables us to give a practical and efficient algorithm for evaluating these character sums, which can be described as follows:

\begin{itemize}
\item \textbf{Step 1:} For each given $p \ge 5$, compute the quantity $N_p$, which can be easily computed for most practical values of $p$  by Magma . Then, we get $a=N_p-p$.

 \item \textbf{Step 2:} Determine the two roots $\gamma_{p,1}$ and $\gamma_{p,2}$ of the polynomial $x^2+aT+p$ in the complex number field, which are
$$\gamma_{p,1}=\frac{-a+\sqrt{a^2-4p}}{2}, \quad \gamma_{p,2}=\frac{-a-\sqrt{a^2-4p}}{2}.$$

\item \textbf{Step 3:} Compute $\Gamma_{p,n}$,  $\lambda_{1,p^n}$ and $\lambda_{2,p^n}$ according to (\ref{2:eva-cha}).

\end{itemize}

Note that $a= \Gamma_{p,1}$ and thus in Step 1 we can compute the value of $a$ directly according to (\ref{keysums}). Utilizing the above algorithm, one knows that for any given prime $p \ge 5$, the character sums $\Gamma_{p,n}$, $\lambda_{1,p^n}$
 and $\lambda_{2,p^n}$ can be computed and expressed explicitly in terms of $n$. The following example illustrates the above procedure of calculating them.
\end{remark}

\begin{example} \label{examp1}For $p=5$, by using Magma we can obtain $N_5=7$, hence $a=2$. So we have $\gamma_{5,1},\gamma_{5,2}=-1 \pm 2 \sqrt{-1}$, hence
\[\Gamma_{5,n}=-\left(-1 + 2 \sqrt{-1}\right)^n-\left(-1 - 2 \sqrt{-1}\right)^n. \]
For $p=7$, by using Magma we can obtain $a=0$ by (\ref{keysums}). So we have $\gamma_{7,1},\gamma_{7,2}=\pm  \sqrt{-7}$. Then, we get
\[\Gamma_{7,n}=-\left(1+(-1)^n\right) \sqrt{-7}^{\,n}. \]
The values of $\Gamma_{p,n}$ for other $p$ can be obtained similarly. Once the value of $\Gamma_{p,n}$ is obtained, so are the values of $\lambda_{1,p^n}$ and $\lambda_{2,p^n}$.
\end{example}

In Table \ref{valp}, for all primes $p \le 1000$ we list the values of $a$ ($\Gamma_{p,1}$) computed with Magma.

\begin{table}
\caption{The values of  $\Gamma_{p,1}$ for $p\leq 1000$}\label{valp}
\centering
\begin{tabular}{|c||c|c|c|c|c|c|c|c|c|c|c|c|c|c|}
\hline
%inserts double horizontal lines
$p$&$5$ & $7$ & $11$ & $13$ & $17$&$19$ & $23$ & $29$ & $31$ & $37$&$41$&$43$&$47$&$53$\\

\hline
$\Gamma_{p,1}$&$2$ & $0$ & $-4$ & $2$ & $-2$&$4$ & $8$ & $-6$ & $-8$ & $-6$& $6$& $-4$& $0$& $2$\\
\hline
\hline
$p$&$59$ & $61$ & $67$ & $71$ & $73$&$79$ & $83$ & $89$ & $97$ & $101$&$103$&$107$&$109$&$113$\\

\hline
$\Gamma_{p,1}$&$-4$ & $2$ & $4$ & $-8$ & $-10$&$8$ & $4$ & $6$ & $-2$ & $18$& $-16$& $12$& $2$& $-18$\\
\hline
\hline
$p$&$127$ & $131$ & $137$ & $139$ & $149$&$151$ & $157$ & $163$ & $167$ & $173$&$179$&$181$&$191$&$193$\\

\hline
$\Gamma_{p,1}$&$8$ & $4$ & $6$ & $12$ & $-14$&$16$ & $2$ & $-12$ & $-24$ & $-6$& $-12$& $-6$& $0$& $-2$\\
\hline
\hline
$p$&$197$ & $199$ & $211$ & $223$ & $227$&$229$ & $233$ & $239$ & $241$ & $251$&$257$&$263$&$269$&$271$\\

\hline
$\Gamma_{p,1}$&$18$ & $-16$ & $20$ & $8$ & $-12$&$-22$ & $-10$ & $16$ & $-18$ & $-20$& $-2$& $8$& $10$& $-8$\\
\hline
\hline

$p$&$277$ & $281$ & $283$ & $293$ & $307$&$311$ & $313$ & $317$ & $331$ & $337$&$347$&$349$&$353$&$359$\\

\hline
$\Gamma_{p,1}$&$26$ & $-26$ & $28$ & $18$ & $-12$&$24$ & $6$ & $-6$ & $-20$ & $-18$& $12$& $-30$& $-2$& $24$\\
\hline
\hline
$p$&$367$ & $373$ & $379$ & $383$ & $389$&$397$ & $401$ & $409$ & $419$ & $421$&$431$&$433$&$439$&$443$\\

\hline
$\Gamma_{p,1}$&$8$ & $10$ & $-20$ & $0$ & $2$&$-14$ & $30$ & $6$ & $-12$ & $10$& $-32$& $14$& $0$& $-20$\\
\hline
\hline

$p$&$449$ & $457$ & $461$ & $463$ & $467$&$479$ & $487$ & $491$ & $499$ & $503$&$509$&$521$&$523$&$541$\\

\hline
$\Gamma_{p,1}$&$14$ & $22$ & $26$ & $-8$ & $36$&$16$ & $32$ & $12$ & $-12$ & $-24$& $-6$& $-26$& $-4$& $18$\\
\hline
\hline

$p$&$547$ & $557$ & $563$ & $569$ & $571$&$577$ & $587$ & $593$ & $599$ & $601$&$607$&$613$&$617$&$619$\\

\hline
$\Gamma_{p,1}$&$-44$ & $26$ & $-28$ & $-10$ & $-36$&$-2$ & $44$ & $14$ & $-24$ & $38$& $40$& $-38$& $-42$& $44$\\
\hline
\hline
$p$&$631$ & $641$ & $643$ & $647$ & $653$&$659$ & $661$ & $673$ & $677$ & $683$&$691$&$701$&$709$&$719$\\

\hline
$\Gamma_{p,1}$&$-16$ & $14$ & $-12$ & $-8$ & $-6$&$-12$ & $10$ & $-34$ & $2$ & $-4$& $4$& $-6$& $10$& $32$\\
\hline
\hline
$p$&$727$ & $733$ & $739$ & $743$ & $751$&$757$ & $761$ & $769$ & $773$ & $787$&$797$&$809$&$811$&$821$\\

\hline
$\Gamma_{p,1}$&$-48$ & $-14$ & $4$ & $8$ & $-24$&$-38$ & $22$ & $-2$ & $18$ & $-28$& $-22$& $-26$& $-4$& $-30$\\
\hline
\hline
$p$&$823$ & $827$ & $829$ & $839$ & $853$&$857$ & $859$ & $863$ & $877$ & $881$&$883$&$887$&$907$&$911$\\

\hline
$\Gamma_{p,1}$&$16$ & $28$ & $50$ & $24$ & $10$&$-42$ & $12$ & $32$ & $18$ & $-50$& $4$& $-8$& $-4$& $-16$\\
\hline
\hline

$p$&$919$ & $929$ & $937$ & $941$ & $947$&$953$ & $967$ & $971$ & $977$ & $983$&$991$&$997$& & \\

\hline
$\Gamma_{p,1}$&$-16$ & $-50$ & $-42$ & $-6$ & $-12$&$54$ & $16$ & $-36$ & $30$ & $24$& $-40$& $26$& & \\
\hline

\end{tabular}
%\label{table:1}
\end{table}

\begin{remark}\label{remaka=0}
If $a=0$, then $\gamma_{p,1},\gamma_{p,2}=\pm \sqrt{-p}$, and we have a simple expression of $\Gamma_{p,n}$ as
 \begin{eqnarray*}
\Gamma_{p,\,n}=\left\{
\begin{array}{cl}
0&{\rm ~if~}n ~{\rm is~odd},\\
-2\sqrt{-1}^{\,n}p^{n/2}&{\rm ~if~}n ~{\rm is~even}.\\
\end{array} \right.\ \
\end{eqnarray*}
It was known that $a=0$ if and only if the elliptic curve $E$ defined over $\gf_p$ in (\ref{ec}) is supersingular, and there is an explicit and efficient formula to determine whether or not $E$ is supersingular (see \cite[Theorem 4.1, Chap. V]{SilveEC}). In particular, for $p \le 1000$, the elliptic curve $E$ defined over $\gf_{p}$ is supersingular if $p=7,47,191,383$ and $439$, thus in these cases the values $\Gamma_{p,n}$, $\lambda_{1,p^n}$ and $\lambda_{2,p^n}$ can be presented in a more compact form.
\end{remark}

\section{The number of solutions to an equation system} \label{sec3}

Let $d=p^n-3$ with $p$ being an odd prime. Denote by $\ffm$ the set of solutions  $(x_1,x_2,x_3,x_4)\in(\gf_{p^n})^4$ of the equation system
\begin{eqnarray}\label{eq-sys}
\left\{
\begin{array}{lllll}
x_1-x_2+x_3-x_4&=&0,\\
x^{d}_1-x^{d}_2+x^{d}_3-x^{d}_4&=&0,
\end{array} \right.\ \
\end{eqnarray}
and $M=|\ffm|$. In this section we shall compute the value of $M$, which plays an important role in determining the differential spectrum of the power mapping $x^{p^n-3}$ over $\gf_{p^n}$.

To this end, we need to make some preparations. Define
\[\ffm_i=\left\{(x_1,x_2,x_3,x_4) \in \ffm~|~ x_i=0\right\}, \quad i=1,2,3,4,\]
and
\[\ffm^{\circ}=\left\{(x_1,x_2,x_3,x_4) \in \ffm~|~  x_1x_2x_3x_4 \neq 0\right\}.\]
It is trivial to see that
\begin{eqnarray} \label{M:eq1}
|\ffm_i \cap \ffm_j| =\left\{\begin{array}{cl}
p^n, & ~\mathrm{if}~ (i,j) \in \{(1,2),(1,4), (2,3), (3,4)\}, \\
1, & ~\mathrm{if} ~ (i,j) \in \{(1,3), (2,4)\},
\end{array}\right.\end{eqnarray}
and
\begin{eqnarray} \label{M:eq2}
|\ffm_i \cap \ffm_j \cap \ffm_k| =1 \text{ for any } 1 \le i<j<k \le 4, \quad \left|\cap_{i=1}^4 \ffm_i \right|=1.\end{eqnarray}
Next we compute $|\ffm_i|$ ($1 \le i \le 4$) and $|\ffm^\circ|$.

The following result about a quartic equation over $\gf_{p^n}$ is useful for computing $|\ffm_i|$ ($1 \le i \le 4$).
Before we give the result, we recall from Section \ref{pre} that for any $\beta\in\gf_{p^n}$ with $\eta(\beta)=1$, the two square roots of $\beta$ are denoted by $\sqrt{\beta}$ and $-\sqrt{\beta}$.

\begin{lemma}\label{Lem-g1}Let $p\geq3$ be an odd prime, and $g_1(x)=x^4+2x^3+x^2+2x+1\in \gf_{p^n}[x]$. Denote by $T_1$ the number of roots of $g_1(x)$ in $\gf_{p^n}$. Then, we have
\begin{eqnarray*}
T_1=\left\{
\begin{array}{lllll}
0,\,\, \mathrm{if}~\eta(2)=-1,~\mathrm{or}~\eta(2)=\eta(-7)=1~\mathrm{but}~\eta(-1+2\sqrt{2})=-1,\\
1, \,\,\mathrm{if}~p=7~\mathrm{and}~n~\mathrm{is~odd},\\
2,\,\, \mathrm{if}~\eta(2)=1~\mathrm{and}~\eta(-7)=-1,\\
3,\,\, \mathrm{if}~p=7~\mathrm{and}~n~\mathrm{is~even},\\
4,\,\, \mathrm{if}~\eta(2)=\eta(-7)=\eta(-1+2\sqrt{2})=1.
\end{array} \right.\ \
\end{eqnarray*}
\end{lemma}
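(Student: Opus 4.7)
The plan is to exploit the fact that $g_1(x)$ is a reciprocal polynomial with $g_1(0) = 1$, so every root is nonzero. Dividing $g_1(x)=0$ through by $x^2$ and setting $y = x + \frac{1}{x}$ transforms the quartic into the quadratic
\[
y^2 + 2y - 1 = 0.
\]
Any root $x \in \gf_{p^n}$ of $g_1$ yields $y \in \gf_{p^n}$, and conversely each root $y \in \gf_{p^n}$ of the above quadratic lifts via $x^2 - yx + 1 = 0$ to two distinct roots $x \in \gf_{p^n}$ when $\eta(y^2 - 4) = 1$, to one root when $y^2 = 4$, and to no root when $\eta(y^2 - 4) = -1$. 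Since the discriminant of $y^2 + 2y - 1$ equals $8 \neq 0$ in odd characteristic, its two roots in the algebraic closure are always distinct.

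I would next solve the quadratic explicitly: $y = -1 \pm \sqrt{2}$, so $T_1 = 0$ whenever $\eta(2) = -1$. Assuming $\eta(2) = 1$, a direct calculation gives $y^2 - 4 = -1 \mp 2\sqrt{2}$, together with the product identity
\[
(-1 + 2\sqrt{2})(-1 - 2\sqrt{2}) = -7.
\]
This identity plays two roles: when $\eta(-7) = 1$ the two factors share the same quadratic character, making $\eta(-1 + 2\sqrt{2})$ independent of the choice of square root of $2$; and more broadly, $\eta(-7)$ governs the joint character of the pair.

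Branching on $\eta(-7)$ then handles the cases away from the singular prime. If $\eta(-7) = -1$, which forces $p \neq 7$, the product is a non-square, so exactly one of $-1 \pm 2\sqrt{2}$ is a nonzero square and $T_1 = 2$. If $\eta(-7) = 1$ with $p \neq 7$, both factors are simultaneously squares or non-squares, and the single criterion $\eta(-1 + 2\sqrt{2}) = 1$ decides between $T_1 = 4$ and $T_1 = 0$.

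The main obstacle is the degenerate prime $p = 7$, where $-7 \equiv 0$ and the dichotomy above collapses. Here $y^2 + 2y - 1$ factors as $(y-2)(y-3)$ over $\gf_7$, so $y \in \{2, 3\}$. The choice $y = 2$ gives $y^2 - 4 = 0$, producing the double root $x = 1$ (consistent with $g_1(1) = 7 \equiv 0$), while $y = 3$ gives $y^2 - 4 = 5$, which is a non-square in $\gf_7$ and becomes a square in $\gf_{7^n}$ exactly when $n$ is even. Summing contributions then recovers $T_1 = 1$ for odd $n$ and $T_1 = 1 + 2 = 3$ for even $n$, completing every case of the list.
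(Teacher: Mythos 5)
Your proposal is correct and follows essentially the same route as the paper: the substitution $y=x+\frac{1}{x}$ reducing $g_1$ to $y^2+2y-1=0$ with discriminant $8$, the lifting discriminants $-1\pm 2\sqrt{2}$ whose product is $-7$, and the separate treatment of $p=7$ via the factor $(y-2)(y-3)$ with $\eta(5)$ depending on the parity of $n$. Your remark that $\eta(-7)=1$ makes $\eta(-1+2\sqrt{2})$ independent of the choice of $\sqrt{2}$ is a small point of care not spelled out in the paper, but the argument is otherwise identical.
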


\begin{proof}
Let $x \in \gf_{p^n}$ be a solution of $g_1(x)$, then we have
\begin{equation}\label{Lem-g1-eq1}
\left(x+\frac{1}{x}\right)^2+2\left(x+\frac{1}{x}\right)-1=0,
\end{equation}
which can be regarded as a quadratic equation in variable $z=x+\frac{1}{x}$ with discriminant $\Delta=2^2-4 \cdot (-1)=8$. If $\eta(\Delta)=-1$, that is, $\eta(2)=-1$, then $T_1=0$. Now suppose $\eta(\Delta)=\eta(2)=1$. Solving (\ref{Lem-g1-eq1}), we have
\[ x+\frac{1}{x}= -1 \mp \sqrt{2},\]
which implies that
\begin{eqnarray} \label{3:eq2} x^2+(1\pm\sqrt2)x+1=0. \end{eqnarray}
To solve (\ref{3:eq2}) over $\gf_{p^n}$, we compute the corresponding discriminants which are $\Delta_{1}=-1+2\sqrt{2}$, $\Delta_{2}=-1-2\sqrt{2}$. Noting that $\Delta_1 \cdot \Delta_2=-7$, there are two cases to consider:

{\it Case 1. $\Delta_1\cdot\Delta_2=0$.} This occurs if and only if $p=7$. In this case, $3^2=2$ hence we may take $\sqrt{2}=3$, then we have $(\Delta_1,\Delta_2)=(5,0)$. For $\Delta_2=0$, the corresponding equation (\ref{3:eq2}) is always solvable with a unique solution. As for $\Delta_1=5$, note that $5$ is a nonsquare in  $\gf_7$. Therefore, if $n$ is odd, then $\eta(5)=-1$ and the equation (\ref{3:eq2}) corresponding to $\Delta_1$ is not solvable in $\gf_{p^n}$, that is, $T_1=1$. On the other hand, if $n$ is even, then $\eta(5)=1$ and the equation (\ref{3:eq2}) corresponding to $\Delta_1$ has two distinct solutions in $\gf_{p^n}$, so in this case we have $T_1=3$.

{\it Case 2. $\Delta_1\cdot\Delta_2 \ne 0$.} Then $p\neq 7$. If $\eta(\Delta_1)=\eta(\Delta_2)=1$, then the equations (\ref{3:eq2}) corresponding to both $\Delta_1$ and $\Delta_2$ are solvable with two distinct solutions, so $T_1=4$. If $\eta(\Delta_1)=\eta(\Delta_2)=-1$, then the equation (\ref{3:eq2}) is not solvable for either $\Delta_1$ or $\Delta_2$, hence $T_1=0$. On the other hand, if $\eta(\Delta_1) \cdot \eta(\Delta_2)=\eta(-7)=-1$, then the corresponding equation (\ref{3:eq2}) is solvable with two distinct solutions in $\gf_{p^n}$ for exactly one of $\Delta_1$ and $\Delta_2$, that is, $T_1=2$.

Summarizing all the above cases we obtain the desired formula for $T_1$. This completes the proof of Lemma \ref{Lem-g1}.
\end{proof}

\begin{remark}\label{remark of T1}
For any given odd prime $p$  and positive integer $n$, in order to get the exact value of $T_1$, one first needs to compute $\eta(2)$ and $\eta(-7)$ in $\gf_{p^n}$,
 which is straightforward according to the Legendre symbols $\left(\frac{2}{p}\right)$, $\left(\frac{-7}{p}\right)$ and the parity of $n$.  If $\eta(2)=\eta(-7)=1$,
then one further needs to check the value of $\eta(-1+2\sqrt{2})$. This can be handled efficiently by the following way:
\begin{itemize}
\item when $\left(\frac{2}{p}\right)=1$, then $-1+2\sqrt{2}$ is an element in $\gf_{p}$, and it is always a square in $\gf_{p^2}$. Thus, the element $-1+2\sqrt{2}$ is a square of $\gf_{p^n}$ iff
$-1+2\sqrt{2}$ is a square of $\gf_{p}$ or $n$ is even;

\item when $\left(\frac{2}{p}\right)=-1$, then $-1+2\sqrt{2}$ is an element in $\gf_{p^2}\setminus \gf_{p}$, and since $\eta(2)=1$, $n$ must be even. Thus,
$\eta(-1+2\sqrt{2})=1$ iff $-1+2\sqrt{2}$ is a square in $\gf_{p^2}$, or $n$ is a multiple of $4$.

\end{itemize}

An alternative approach to computing $\eta(-1+2\sqrt{2})$ is based on investigating the polynomial $(x^2+1)^2-8$, that is, $x^4+2x^2-7$. We have $\eta(-1+2\sqrt{2})=1$
if and only if  $x^4+2x^2-7$ has a root in $\gf_{p^n}$. In order to determine whether the polynomial $x^4+2x^2-7\in\gf_{p}[x]$ has a root in $\gf_{p^n}$, it suffices to verify whether it has roots in $\gf_p$ and $\gf_{p^2}$. Then, combined with the parity of $n$ or $n/2$, we can obtain the desired result. The details are omitted here.

\end{remark}

\begin{lemma}\label{L3-Mi} With the notation introduced above,  for any $1 \le i \le 4$, we have $|\ffm_i|=p^n+(1+T_1)(p^n-1)$, where $T_1$ is given in Lemma \ref{Lem-g1}.
\end{lemma}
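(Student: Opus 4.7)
The plan is to reduce all four cardinalities to a single one by symmetry, then to compute $|\ffm_2|$ directly by substitution in the first defining equation.

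First I would observe that since $d=p^n-3$ is even for odd $p$, the map $(x_1,x_2,x_3,x_4)\mapsto(-x_2,-x_1,-x_4,-x_3)$ preserves both defining equations of $\ffm$ and carries $\ffm_1$ bijectively onto $\ffm_2$; the transposition $(x_1,x_2,x_3,x_4)\mapsto(x_3,x_2,x_1,x_4)$ yields $|\ffm_1|=|\ffm_3|$, and the transposition $(x_1,x_2,x_3,x_4)\mapsto(x_1,x_4,x_3,x_2)$ yields $|\ffm_2|=|\ffm_4|$. Hence all four cardinalities coincide and it suffices to compute $|\ffm_2|$.

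Setting $x_2=0$, the first equation forces $x_4=x_1+x_3$, so $|\ffm_2|$ equals the number of pairs $(x_1,x_3)\in\gf_{p^n}^2$ with $(x_1+x_3)^d=x_1^d+x_3^d$. Using the conventions $0^d=0$ and $a^d=a^{-2}$ for $a\in\gf_{p^n}^*$, I would split into subcases according to which of $x_1,x_3,x_4$ vanish. In the degenerate subcase $x_1=0$ (any $x_3$) the equation holds automatically, contributing $p^n$ solutions; in the subcase $x_3=0$, $x_1\ne 0$ the equation is again automatic and contributes $p^n-1$ further solutions; the subcase $x_4=0$ with $x_1,x_3$ both nonzero forces $x_1=-x_3$ and then $2x_3^{-2}=0$, which is impossible.

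In the remaining subcase $x_1x_3x_4\ne 0$, setting $t=x_1/x_3$ converts $(x_1+x_3)^{-2}=x_1^{-2}+x_3^{-2}$ into $(t+1)^{-2}=t^{-2}+1$, and clearing denominators yields precisely
\[
g_1(t)=t^4+2t^3+t^2+2t+1=0.
\]
Because $g_1(0)=1$ and $g_1(-1)=-1$ in any odd characteristic, every root of $g_1$ in $\gf_{p^n}$ automatically satisfies the admissibility conditions $t\ne 0,-1$, and each such root $t$ yields $p^n-1$ valid pairs $(tx_3,x_3)$ with $x_3\in\gf_{p^n}^*$. Summing the three contributions gives
\[
|\ffm_2|=p^n+(p^n-1)+T_1(p^n-1)=p^n+(1+T_1)(p^n-1),
\]
as claimed. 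The only mildly delicate point is to confirm that no root of $g_1$ has to be excluded as spurious; this is settled by the two elementary evaluations at $t=0$ and $t=-1$ above, so no real obstacle arises.
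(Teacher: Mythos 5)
Your proof is correct and follows essentially the same route as the paper's: reduce all four sets to one by symmetry of the system, dehomogenize by dividing by $x_3$, and reduce the nondegenerate count to the quartic $g_1(t)=0$, with the degenerate subcases contributing $p^n+(p^n-1)$. The paper works with $\ffm_4$ instead of $\ffm_2$ and is terser about the symmetries, but the substance is identical.
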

\begin{proof} It is easy to see that $|\ffm_i|=|\ffm_4|$ for any $1 \le i \le 4$. So we only consider $\ffm_4$, that is, $x_4=0$ in (\ref{eq-sys}). If $x_3=0$, then $x_1=x_2$ and (\ref{eq-sys}) has $p^n$ solutions. Now suppose $x_3\neq 0$, let $y_1=\frac{x_1}{x_3}$ and $y_2=\frac{x_2}{x_3}$, then $y_1$ and $y_2$ satisfy
\begin{eqnarray}\label{L3-eq1}
\left\{
\begin{array}{lllll}
y_1-y_2+1=0,\\
y^{p^n-3}_1-y^{p^n-3}_2+1=0.
\end{array} \right.\ \
\end{eqnarray}
Denote by $L_0$ the number of solutions $(y_1,y_2)\in(\gf_{p^n})^2$ of (\ref{L3-eq1}). Thus we have $|\ffm_4|=p^n+(p^n-1)L_{0}$.

Note that (\ref{L3-eq1}) is equivalent to
\begin{equation}\label{L3-eq2}
(y_1+1)^{p^n-3}-y^{p^n-3}_1=1.
\end{equation}
It is obvious that $y_1=0$ is a solution of (\ref{L3-eq2}). If $y_1\neq 0$, then (\ref{L3-eq2}) is equivalent to $g_1(x)=0$, which has been  investigated in  Lemma \ref{Lem-g1}.
Thus, $L_0=1+T_1$ and $|\ffm_4|=p^n+(1+T_1)(p^n-1)$. This proves Lemma \ref{L3-Mi}.
\end{proof}

\begin{lemma} \label{Lem-M0} With the notation introduced above, we have
\begin{eqnarray*} \label{M0:eq3}
\left|\ffm^\circ\right|=(p^n-1)\left(3p^n-8-2\eta(-1)-\eta(-3)\left(2+\eta(-3)\right)+\lambda_{2,p^n}\right),
\end{eqnarray*} where $\lambda_{2,p^n}$ is defined as in (\ref{2:f2}).
\end{lemma}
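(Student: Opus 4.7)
The plan is to parametrize the system \eqref{eq-sys} under the nonzero constraint, reduce to a counting problem governed by a character sum of the form appearing in $\lambda_{2,p^n}$, and then evaluate. Since every $x_i \neq 0$, we have $x_i^d = x_i^{-2}$. Eliminating $x_4 = x_1 + x_3 - x_2$ via the first equation and introducing the symmetric quantities $s = x_1 + x_3 = x_2 + x_4$, $p = x_1 x_3$, and $q = x_2 x_4$, a direct manipulation of $\frac{1}{x_1^2} + \frac{1}{x_3^2} = \frac{1}{x_2^2} + \frac{1}{x_4^2}$ (using $x^2 + y^2 = (x+y)^2 - 2xy$) transforms the second equation into
$$(q - p)\bigl[s^2(p+q) - 2pq\bigr] = 0.$$
This splits $\ffm^\circ$ into Case A ($q = p$) and Case B ($q \neq p$, $s^2(p+q) = 2pq$). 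In Case A, $x_2$ satisfies the same quadratic $X^2 - sX + p$ as $x_1$ and $x_3$, forcing $x_2 \in \{x_1, x_3\}$ and correspondingly $x_4 \in \{x_3, x_1\}$; a short inclusion--exclusion on whether $x_1 = x_3$ yields $|\text{Case A}| = (p^n - 1)(2p^n - 3)$.

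In Case B, I verify that $s \neq 0$, $s^2 \neq 2p$, and $p \neq s^2$, after which $q = -s^2 p/(s^2 - 2p)$ is uniquely determined. The admissible $x_2 \in \gf_{p^n}^* \setminus \{s, x_1, x_3\}$ are precisely the roots of $X^2 - sX + q$, so their count is $1 + \eta(s^2 - 4q)$ (with $\eta(0) = 0$). A short computation gives $s^2 - 4q = s^2(s^2 + 2p)/(s^2 - 2p)$, hence
$$\eta(s^2 - 4q) = \eta\bigl((x_1^2 + x_3^2)(x_1^2 + 4x_1 x_3 + x_3^2)\bigr).$$
Setting $t = x_3/x_1$ with $x_1$ ranging freely over $\gf_{p^n}^*$, the character value depends only on $t$ and equals $\eta((1 + t^2)(1 + 4t + t^2))$, which is exactly the summand of $\lambda_{2,p^n}$. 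The constraints translate to $t \neq -1$, $t^2 + 1 \neq 0$, $t^2 + t + 1 \neq 0$, cutting out a set $S \subset \gf_{p^n}^*$ with $|S| = p^n - 4 - \eta(-1) - \eta(-3)$ (using Lemma~\ref{charactersumquadratic} to count roots, with the conventions $\eta(0)=0$).

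The last step is to recover the partial sum $\Sigma := \sum_{t \in S} \eta((1+t^2)(1+4t+t^2))$ from the full sum $\lambda_{2,p^n}$ by subtracting the summand values at the excluded points: $1$ at $t = 0$, $\eta(-1)$ at $t = -1$, $0$ at each root of $t^2 + 1$, and $\eta(-3) + \eta(-3)^2$ in total at the roots of $t^2 + t + 1$. Assembling $|\ffm^\circ| = |\text{Case A}| + (p^n - 1)(|S| + \Sigma)$ and collecting terms produces the claimed formula. The main obstacle is the last contribution: on the locus $t^2 + t + 1 = 0$ the identity $t^2 = -t - 1$ gives $(t^2 + 1)(t^2 + 4t + 1) = -3t^2$, whose character is $\eta(-3)$; the $1 + \eta(-3)$ roots then contribute $\eta(-3) + \eta(-3)^2$, and the $\eta(-3)^2$ term is precisely what makes the final formula $-\eta(-3)(2+\eta(-3))$ valid uniformly across all odd primes, including $p = 3$ where $\eta(-3) = 0$ and $t^2 + t + 1 = (t-1)^2$ degenerates.
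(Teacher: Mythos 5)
Your proof is correct and follows essentially the same route as the paper's: both split the count according to whether $\{x_2,x_4\}=\{x_1,x_3\}$ (the paper's root $z=y_2$ of its quadratic in $z=y_1y_3$ is exactly your case $q=p$, yielding the $2p^n-3$ term) and reduce the remaining branch to counting roots of a quadratic whose discriminant has quadratic character $\eta\left((u^2+1)(u^2+4u+1)\right)$, recovering $\lambda_{2,p^n}$ after subtracting the contributions of the excluded points. The only differences are cosmetic --- you solve for the pair $(x_2,x_4)$ given $(x_1,x_3)$ via symmetric functions and normalize by $x_1$ at the end, whereas the paper normalizes by $x_4$ first and solves for $(y_1,y_3)$ given $y_2$ --- and your accounting of the degenerate loci ($s=0$, $x_1^2+x_3^2=0$, $x_1^2+x_1x_3+x_3^2=0$, the last giving the $\eta(-3)(2+\eta(-3))$ correction) matches the paper's term by term.
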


\begin{proof}
Since $x_4\neq0$, putting $y_i=\frac{x_i}{x_4}$ for $i=1,2$ and $3$, we have
\begin{eqnarray}\label{Lem-M0-eq1}
\left\{
\begin{array}{lllll}
y_1-y_2+y_3-1=0,\\
y^{p^n-3}_1-y^{p^n-3}_2+y^{p^n-3}_3-1=0.
\end{array} \right.
\end{eqnarray}
Denote by $M_0$ the number of solutions $(y_1,y_2,y_3) \in \left(\gf_{p^n}^*\right)^3$ of the equation system (\ref{Lem-M0-eq1}). Then we have
\begin{eqnarray} \label{M-and-M0} \left|\ffm^\circ\right|=M_0(p^n-1).\end{eqnarray}
Now we compute $M_0$.

Since $y_i\ne 0$ for all $i \in \{1,2,3\}$, using $y_1y_3=z$, then (\ref{Lem-M0-eq1}) becomes
\begin{eqnarray}\label{Lem-M0-eq2}
\left\{
\begin{array}{lllll}
y_1+y_3=1+y_2,  \\
y_1y_3=z, \quad z \in \gf_{p^n}^*, \\
y^{-2}_1+y^{-2}_3=1+y^{-2}_2.
\end{array} \right.
\end{eqnarray}
From the second and the third equations in (\ref{Lem-M0-eq2}) we get
\[y^{-2}_2+1=\frac{y^2_1+y^2_3}{y^2_1y^2_3}=\frac{(y_2+1)^2-2z}{z^2},\]
which is equivalent to
\begin{eqnarray*}
(y^{-2}_2+1)z^2+2z-(y_2+1)^2=0.
\end{eqnarray*}
Then, we can conclude that $M_0$ is equal to the number of solutions $(y,y_2,z) \in \left(\gf_{p^n}^*\right)^3$ of the equation system
\begin{eqnarray}\label{Lem-M0-eq3}
\left\{
\begin{array}{lll}
y^2-(1+y_2)y+z=0,  \\
(y^{-2}_2+1) z^2+2z-(y_2+1)^2=0.
\end{array} \right.
\end{eqnarray}
 For determining $M_0$, now our strategy is to count the number of pairs $(y,z) \in \left(\gf_{p^n}^*\right)^2$ satisfying (\ref{Lem-M0-eq3}) for each fixed $y_2 \in \gf_{p^n}^*$.  We distinguish two cases as follows.

\textbf{\emph{Case 1:}} $y^{-2}_2+1=0$.  This case occurs only when $\eta(-1)=1$.  Then $y_2=\pm \sqrt{-1}$ and it follows that $z=y_2$ from the second equation in (\ref{Lem-M0-eq3}). Then the first equation in (\ref{Lem-M0-eq3}) leads to $y=1$ or $y=y_2$. Thus, for each such $y_2$ it contributes $2$ solutions to $M_0$.

\textbf{\emph{Case 2:}} $y^{-2}_2+1\neq0$. Then,  the second equation in (\ref{Lem-M0-eq3}) is a quadratic equation in variable $z$, and it has two solutions $z=y_2$ and $z=-\frac{y_2(y_2+1)^2}{y^2_2+1}$. There are two subcases
that need to  be considered.

\textbf{\emph{Subcase 2.1:}} $z=y_2$. Then the first equation in (\ref{Lem-M0-eq3}) still has two solutions $y=1$ or $y=y_2$ if $y_2 \ne 1$; however, it leads to one solution if $y_2=1$.

\textbf{\emph{Subcase 2.2:}} $z=-\frac{y_2(y_2+1)^2}{y^2_2+1}$. Then the first equation of (\ref{Lem-M0-eq3}) becomes
\begin{eqnarray}\label{Lem-M0-eq4}
y^2-(y_2+1)y-\frac{y_2(y_2+1)^2}{y^2_2+1}=0.
\end{eqnarray}
This is a quadratic equation in variable $y$ with discriminant given by  $\Delta=\frac{(y_2+1)^2(y^2_2+4y_2+1)}{y^2_2+1}$. Note that $y_2=-1$ will leads to a zero solution $y=0$ of (\ref{Lem-M0-eq4}) and $z=0$, which should be discarded since $y$, $z\in\gf_{p^n}^*$. Therefore, $y_2\neq -1$. When $y_2\in\gf_{p^n}\setminus\{0,-1\}$, (\ref{Lem-M0-eq4}) has two solutions in $\gf_{p^n}^*$ if $\eta(\Delta)=1$, a unique solution in $\gf_{p^n}^*$ if $\eta(\Delta)=0$, and no solution if $\eta(\Delta)=-1$. So this subcase contributes $\left(1+\eta\left(\frac{y^2_2+4y_2+1}{y^2_2+1}\right)\right)$ solutions for $y_2 \in \gf_{p^n} \setminus \{0,-1\}$.

Note that when $y_2=-\frac{y_2(y_2+1)^2}{y^2_2+1}$, the solutions in Subcases 2.1 and 2.2 will overlap, and they need to be excluded in the counting. Since $y_2\neq0$, $y_2=-\frac{y_2(y_2+1)^2}{y^2_2+1}$ is equivalent to that $y_2^2+y_2+1=0$. This holds if and only if when $\eta(-3)=1$ or $\eta(-3)=0$, and for such $y_2$ the above two subcases are the same. More precisely, when $\eta(-3)=1$, we can solve $y_2=\frac{-1\pm\sqrt{-3}}{2} \in \gf_{p^n}^*$ and each $y_2$ contributes $2$ solutions to $M_0$; if $\eta(-3)=0$, i.e., $p=3$, then $y_2=1$ and it contributes only one solution to $M_0$; if $\eta(-3)=-1$, then no such $y_2$ exists in $\gf_{p^n}$. Therefore, there are $\sum_{\substack{y_2^{2} +y_2+1 = 0}}\left(1+\eta(-3)\right)$ solutions that have been counted twice in Subcases 2.1 and 2.2.

Summarizing the above discussions, we can write the total number of solutions $M_0$ of the equation system (\ref{Lem-M0-eq3}) as
\[
\sum_{\substack{y_2^{-2} +1 = 0}}2+
\sum_{\substack{y_2^{-2} +1 \ne 0\\
y_2 \ne 1,0}}2+
\sum_{\substack{y_2^{-2} +1 \ne 0\\
y_2=1}}1+
\sum_{\substack{y_2^{-2} +1 \ne 0\\
y_2 \ne 0,-1}} \left(1+\eta\left(\frac{y_2^2+4y_2+1}{y_2^2+1}\right)\right)-\sum_{\substack{y_2^{2} +y_2+1 = 0}}\left(1+\eta(-3)\right). \]
Noting that
\[\sum_{\substack{y_2^{-2} +1 = 0}}2=2(1+\eta(-1)), \quad \sum_{\substack{y_2^{2} +y_2+1 = 0}}\left(1+\eta(-3)\right)=\left(1+\eta(-3)\right)^2,\]
and \[\sum_{y_2^{-2} +1=0} \eta\left(\frac{y_2^2+4y_2+1}{y_2^2+1}\right)=0,\]
we can obtain
\begin{eqnarray*}  M_0=3p^n-8-2\eta(-1)-\eta(-3)\left(2+\eta(-3)\right)+\sum_{y_2 \in \gf_{p^n}} \eta\left(\frac{y_2^2+4y_2+1}{y_2^2+1}\right). \end{eqnarray*}
 Then the desired value of $|\ffm^\circ|$ follows immediately from the fact that $\lambda_{2,p^n}=\sum_{y_2 \in \gf_{p^n}} \eta\left(\frac{y_2^2+4y_2+1}{y_2^2+1}\right)$ and the relation (\ref{M-and-M0}).
\end{proof}

With the above preparations, we can now obtain the value of $M$ easily.
\begin{theorem}\label{L4}   Let $p\geq 3$ be an odd prime, $T_1$ be given in Lemma \ref{Lem-g1}, and $\lambda_{2,p^n}$ be defined as in (\ref{2:f2}). Then the number of solutions to the equation system (\ref{eq-sys}), denoted by $M$,  is given by
$$M=1+(p^n-1)\left[3p^n+\lambda_{2,p^n}+4T_1-4-2\eta(-1)-\eta(-3)\left(2+\eta(-3)\right)\right].$$
\end{theorem}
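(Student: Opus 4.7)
The plan is to assemble $M$ from two disjoint pieces: the ``generic'' part $\ffm^\circ$ in which all four coordinates are nonzero, and the ``degenerate'' part $\ffm \setminus \ffm^\circ = \ffm_1 \cup \ffm_2 \cup \ffm_3 \cup \ffm_4$. Since Lemma \ref{Lem-M0} already supplies $|\ffm^\circ|$, the only remaining task is to count the degenerate part. I would do this by inclusion–exclusion:
\[
|\ffm_1 \cup \ffm_2 \cup \ffm_3 \cup \ffm_4| = \sum_{i} |\ffm_i| - \sum_{i<j} |\ffm_i \cap \ffm_j| + \sum_{i<j<k} |\ffm_i \cap \ffm_j \cap \ffm_k| - \left|\bigcap_{i=1}^{4} \ffm_i\right|.
\]

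Each of the four terms is already in hand. By symmetry and Lemma \ref{L3-Mi}, the first sum equals $4\bigl[p^n + (1+T_1)(p^n-1)\bigr]$. For the second sum, the identities (\ref{M:eq1}) show that four of the six pairs contribute $p^n$ each (namely the ``adjacent'' indices $(1,2),(1,4),(2,3),(3,4)$, where the equations $x_i=x_j=0$ force a one-parameter family) and the remaining two pairs contribute $1$ each (from $(1,3)$ and $(2,4)$, which force all coordinates to vanish), giving a total of $4p^n+2$. Finally, (\ref{M:eq2}) gives $\sum_{i<j<k}|\ffm_i \cap \ffm_j \cap \ffm_k| = 4$ and $|\cap_{i=1}^4 \ffm_i| = 1$. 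Substituting yields
\[
|\ffm_1 \cup \ffm_2 \cup \ffm_3 \cup \ffm_4| = 4\bigl[p^n + (1+T_1)(p^n-1)\bigr] - (4p^n+2) + 4 - 1 = 1 + 4(1+T_1)(p^n-1).
\]

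Adding this to the expression for $|\ffm^\circ|$ given in Lemma \ref{Lem-M0} and collecting the factor $(p^n-1)$ gives the stated formula; the constant term $1$ comes from the quadruple-intersection contribution. There is no real obstacle here since all the delicate counting (the behavior of the polynomial $g_1$ underlying $T_1$, and the quadratic-character sum $\lambda_{2,p^n}$ underlying $|\ffm^\circ|$) was already isolated in Lemmas \ref{Lem-g1}, \ref{L3-Mi} and \ref{Lem-M0}; the only point that requires a bit of care is making sure the pairwise intersection counts in (\ref{M:eq1}) are handled with the correct multiplicities, since the pattern is not symmetric across all six pairs $\{i,j\}$.
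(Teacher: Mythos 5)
Your proposal is correct and follows essentially the same route as the paper: the paper's proof is precisely the identity $M=|\ffm^\circ|+\sum_i|\ffm_i|-\sum_{i<j}|\ffm_i\cap\ffm_j|+\sum_{i<j<k}|\ffm_i\cap\ffm_j\cap\ffm_k|-|\cap_i\ffm_i|$ combined with Lemmas \ref{L3-Mi} and \ref{Lem-M0} and the intersection counts (\ref{M:eq1})--(\ref{M:eq2}). Your arithmetic, including the simplification of the degenerate part to $1+4(1+T_1)(p^n-1)$, checks out against the stated formula.
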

\begin{proof} By the inclusion-exclusion principle we have
\[M=|\ffm^\circ|+\sum_{i=1}^4 \left|\ffm_i\right|-\sum_{1 \le i<j \le 4} \left|\ffm_i \cap \ffm_j\right|+\sum_{1 \le i<j <k\le 4} \left|\ffm_i \cap \ffm_j\cap \ffm_k\right|-\left|\cap_{i=1}^4\ffm_i\right|. \]
Then using Lemmas \ref{L3-Mi} and \ref{Lem-M0} and noting (\ref{M:eq1}) and (\ref{M:eq2}), we obtain the desired result.
\end{proof}

%\begin{remark}
%The above theorem shows that the value of $M$ can be derived from the quadratic character sum $\lambda_{2,p^n}$, which has been investigated in Section
%\ref{pre}.  That is to say, for any prime $p\geq3$,
%one can explicitly express the parameter $M$ in terms of $n$ by utilizing (\ref{pre:p=3}), Theorem \ref{thm gama}, Lemma \ref{Lem-g1} and Theorem \ref{L4}.
%\end{remark}

\section{The differential spectrum of $x^{p^n-3}$}\label{sec4}
For the power function $F(x)=x^{p^n-3}$ with $p$ being an odd prime in Theorem \ref{hellresult}, it is already known
that the differential uniformity $\delta(F)$ of $F(x)$ satisfies $1\leq\delta(F)\leq 5$ \cite{HRS}.  Recalling Definition \ref{def1}, we can assume the differential spectrum of $F(x)=x^{p^n-3}$ as
$$\mathbb{S}=[\omega_0,\omega_1,\omega_2,\omega_3,\omega_4,\omega_5 ]. $$
For $p=3$, the differential spectrum $\mathbb{S}$ has been completely  determined in \cite{XZLH}. However, the method used in \cite{XZLH} heavily depends on the characteristic $p=3$ and doesn't  seem to work for the general case  $p\geq5$.  In this section, for any odd prime $p\geq 3$, we will
compute  $\mathbb{S}$ by a unified approach.

\subsection{Some basic properties about the differential spectrum}
Before beginning  our computations, we mention some basic properties about the differential spectrum of a power mapping $x^d$ over finite fields.  Let  $x^d$ be a power mapping over $\gf_{p^n}$ with differential uniformity $\delta$, then using the notation in Definition \ref{def1} we have \begin{equation}\label{identity}
\sum_{i=0}^{\delta}\omega_i=\sum_{i=0}^{\delta}i\omega_i=p^n.
\end{equation}
The identities in  (\ref{identity}) are well-known \cite{BCC,Yan,XZLH}, and are useful in computing the differential spectrum. Moreover, the following identity
also plays an important role in the computation, which was established in \cite{HRS}.

\begin{lemma}\cite[Theorem 10]{HRS}
With the notation introduced in Definition \ref{def1}, let $M$ denote the number of solutions $(x_1,x_2,x_3,x_4)\in (\gf_{p^n})^4$ of the equation system
\begin{eqnarray*}
\left\{
\begin{array}{lllll}
x_1-x_2+x_3-x_4&=&0,\\
x^{d}_1-x^{d}_2+x^{d}_3-x^{d}_4&=&0.
\end{array} \right.\ \
\end{eqnarray*}
Then, we have \begin{equation}\label{i2omega}
\sum_{i=0}^{\delta}i^2\omega_i=\frac{M-p^{2n}}{p^n-1}.
\end{equation}
\end{lemma}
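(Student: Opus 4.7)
The plan is to regroup the four-tuples counted by $M$ according to the shared differences of the two pairs $(x_1,x_2)$ and $(x_4,x_3)$, and then identify the resulting sum as a second moment of $\delta_F(\cdot,\cdot)$. First I would introduce, for each $(a,b) \in \gf_{p^n}^2$, the auxiliary count
\[
N(a,b) \;=\; \left|\{(x,y) \in \gf_{p^n}^2 : x - y = a,\ x^d - y^d = b\}\right|.
\]
The system defining $M$ says exactly that the pair $(x_1,x_2)$ and the pair $(x_4,x_3)$ must share both a common difference and a common $d$-th power difference; declaring these common values $a$ and $b$ decouples the tuple into two independent pairs and yields
\[
M \;=\; \sum_{a,b \,\in\, \gf_{p^n}} N(a,b)^2.
\]

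Next I would isolate the contribution of $a=0$. In this case $x=y$ forces $b=0$, so $N(0,0)=p^n$ while $N(0,b)=0$ for $b\neq 0$, contributing $p^{2n}$ to the sum. For $a\neq 0$, writing $x=y+a$ and using $\mathbb{D}_a F(y) = (y+a)^d - y^d$ shows immediately that $N(a,b) = \delta_F(a,b)$ by Definition \ref{def1}.

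The final step invokes the power-map reduction $\delta_F(a,b) = \delta_F(1,\,b/a^d)$ recorded just after Definition \ref{def1}. For each fixed $a\in\gf_{p^n}^*$, the substitution $b' = b/a^d$ is a bijection of $\gf_{p^n}$, so
\[
\sum_{b\,\in\,\gf_{p^n}} \delta_F(a,b)^2 \;=\; \sum_{b'\,\in\,\gf_{p^n}} \delta_F(1,b')^2 \;=\; \sum_{i=0}^{\delta} i^2\,\omega_i,
\]
where the last equality comes from grouping $b'$ according to the value $i=\delta_F(1,b')$. Summing over the $p^n-1$ nonzero values of $a$ and adding the $p^{2n}$ contribution from $a=0$ gives $M = p^{2n} + (p^n-1)\sum_{i=0}^{\delta} i^2\omega_i$, which rearranges to \eqref{i2omega}.

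No genuine obstacle arises in this argument; the only bookkeeping subtlety is that the reduction $\delta_F(a,b)=\delta_F(1,b/a^d)$ requires $a\neq 0$, which is precisely why the degenerate case $a=0$ must be peeled off and handled directly through $N(0,0)=p^n$ before the bijection-of-$b$ argument is applied.
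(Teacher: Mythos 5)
Your proof is correct, and it is the standard second-moment argument: writing the system as $x_1-x_2=x_4-x_3$ and $x_1^d-x_2^d=x_4^d-x_3^d$ gives $M=\sum_{a,b}N(a,b)^2$, the $a=0$ term contributes $p^{2n}$, and for $a\neq 0$ the identity $\delta_F(a,b)=\delta_F(1,b/a^d)$ collapses each inner sum to $\sum_i i^2\omega_i$. The paper itself offers no proof of this lemma --- it is quoted verbatim from \cite[Theorem 10]{HRS} --- and your argument is essentially the one given there, so there is nothing to compare beyond noting that your handling of the degenerate $a=0$ case and of the bijection $b\mapsto b/a^d$ is exactly right.
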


With the equalities in \eqref{identity} and \eqref{i2omega}, our strategy for computing the differential spectrum $\mathbb{S}$ of $x^{p^n-3}$ can be sketched as follows: first we will compute $\omega_5$, $\omega_3$  and $\omega_2$; then we  establish a system of linear equations in three variables $\omega_0$, $\omega_1$ and $\omega_4$ by (\ref{identity}) and (\ref{i2omega}), which enables us to express $\omega_0$, $\omega_1$ and $\omega_4$ in terms of the known $\omega_5$, $\omega_3$  and $\omega_2$. Next we begin with the general setup for investigating the differential spectrum.

\subsection{The general setup}
 For any $b\in\gf_{p^n}$, the derivative equation $\mathbb{D}_1F(x)=b$ is
\begin{eqnarray}\label{main}
(x+1)^{p^n-3}-x^{p^n-3}=b.
\end{eqnarray}
Let $N(b)$ denote the number of its solutions in $\gf_{p^n}$. The elements $\omega_i$'s for $i\in\{0,1,\cdots,5\}$ in
the differential spectrum $\mathbb{S}$  are actually the number of  $b\in\gf_{p^n}$ such that $N(b)=i$.

It can be easily observed  that the derivative equation \eqref{main} has a solution $x$ if and only if the derivative equation $(x+1)^{p^n-3}-x^{p^n-3}=-b$ has a solution $-x-1$. Thus,  $N(b)=N(-b)$ for any $b$.  When $b=0$, it is easy to verify that $x=-\frac{1}{2}$ is the unique solution of (\ref{main}). That is ta say,  $N(0)=1$. Moreover,  note that in  (\ref{main}) if  $b$ is equal to $1$ (resp. $-1$), then $x=0$ (resp. $x=-1$) is a solution to the corresponding equation (\ref{main}). Since $N(0)$ is already determined, in the following we only need to consider $N(b)$ for $b\neq0$.

  For $b\in\gf_{p^n}^*$, define
\begin{eqnarray}\label{poly-gb}
g_{b}(x)=x^4+2x^3+x^2+2b^{-1}x+b^{-1},
 \end{eqnarray}
and denote the number of its roots in $\gf_{p^n}$ by $T_b$. Note that for $b=1$, $T_b$ has already been determined in Lemma \ref{Lem-g1}.
This polynomial is closely connected with the derivative equation \eqref{main}.
As a matter of fact, when $x\neq0, -1$, (\ref{main})  can be written as $(x+1)^{-2}-x^{-2}=b$, which is equivalent to
\begin{eqnarray*}
	g_b(x)=x^4+2x^3+x^2+2b^{-1}x+b^{-1}=0. \end{eqnarray*}
Hence we can arrive at the following result:
\begin{equation}\label{T-and-N}
N(b)=T_b\,\,\,\,\mbox{for\,\,\,\,each\,\,\,\,} b\in\gf_{p^n}^*\setminus\{\pm 1\} \,\,\,\,\mbox{and} \,\,\,\,N(\pm1)=1+T_{\pm1}.
\end{equation}
Moreover, since $N(b)=N(-b)$ for any $b$, it follows that \begin{equation}\label{pofTb}
T_b=T_{-b} \text{ for any }b\in\gf_{p^n}^*.\end{equation}

%\smallskip

\subsection{The values of $\omega_5$}

Note that (\ref{poly-gb}) has at most four roots in $\gf_{p^n}$. By  (\ref{T-and-N}), it is easy to see that $\delta(F)=5$ if and only if $N(1)=N(-1)=5$. Then, we have $\omega_5 \in \{0,2\}$, and $\omega_5=2$ if and only if $T_1=4$. The condition for $T_1=4$ has already been shown in Lemma \ref{Lem-g1}. Thus, we can determine $\omega_5$ in the differential spectrum $\mathbb{S}$ as follows.
\begin{theorem}\label{w5}With the notation introduced above,  we have
\begin{eqnarray*}
\omega_5=\left\{
\begin{array}{lllll}
2, ~\mathrm{if}~\eta(2)=\eta(-7)=\eta(-1+2\sqrt{2})=1,\\
0, ~\mathrm{otherwise}.\\
\end{array} \right.\ \
\end{eqnarray*}
\end{theorem}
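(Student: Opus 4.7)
The plan is to reduce the determination of $\omega_5$ to the question, already answered in Lemma \ref{Lem-g1}, of when the quartic $g_1(x) = x^4 + 2x^3 + x^2 + 2x + 1$ splits completely in $\gf_{p^n}$. First I would recall from the general setup that for any $b \in \gf_{p^n}$,
\begin{equation*}
N(b) = \begin{cases} 1, & b = 0, \\ T_b, & b \in \gf_{p^n}^* \setminus \{\pm 1\}, \\ 1 + T_b, & b = \pm 1, \end{cases}
\end{equation*}
where $T_b$ denotes the number of roots of the quartic $g_b(x)$ defined in \eqref{poly-gb}. Since $\deg g_b = 4$, we always have $T_b \le 4$. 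In particular $N(b) \le 4$ whenever $b \notin \{-1,+1\}$, so the value $N(b)=5$ can only occur at $b = \pm 1$, and occurs there precisely when $T_{\pm 1} = 4$.

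The next step is to exploit the symmetry $T_1 = T_{-1}$, recorded in \eqref{pofTb}, which implies $N(1) = N(-1)$. Hence $\{b : N(b) = 5\}$ is either empty or equals $\{-1, +1\}$, so $\omega_5 \in \{0,2\}$, and $\omega_5 = 2$ is equivalent to $T_1 = 4$.

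Finally, one substitutes $b = 1$ into \eqref{poly-gb} to recover exactly the polynomial $g_1(x)$ studied in Lemma \ref{Lem-g1}, which characterizes $T_1 = 4$ as the condition $\eta(2) = \eta(-7) = \eta(-1 + 2\sqrt{2}) = 1$. Combining this with the two-valued nature of $\omega_5$ established above yields the stated formula. There is essentially no genuine obstacle here: the heavy lifting was already done in Lemma \ref{Lem-g1}, and the argument above is just the bookkeeping needed to translate that root-counting statement into a statement about $\omega_5$.
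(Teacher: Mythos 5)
Your proposal is correct and follows essentially the same route as the paper: bound $N(b)\le 4$ for $b\neq\pm1$ via $\deg g_b=4$, use $T_1=T_{-1}$ to conclude $\omega_5\in\{0,2\}$ with $\omega_5=2$ iff $T_1=4$, and then invoke Lemma \ref{Lem-g1}. No issues.
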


%\begin{remark}
%For each given prime $p\ge 3$, when applying this theorem to obtain $\omega_5$, one first needs to check whether the element $2$ is a square in $\gf_{p^n}$ or not. If $\eta(2)=1$, then one also needs to compute $\eta(-1\pm 2\sqrt{2})$
%{\mli as discussed in Remark \ref{remark of T1}}
%\end{remark}

\subsection{The values of $\omega_3$}

Next we investigate the value of $\omega_3$. When $N(b)=3$, we distinguish the following two cases.

{\it Case 1: } $b=\pm1$.  $N(1)=N(-1)=3$ if and only if  $T_1=2$. By Lemma \ref{Lem-g1}, this occurs only when $\eta(2)=1$ and $\eta(-7)=-1$.

{\it Case 2: } $b\neq\pm1$. By (\ref{T-and-N}),  $N(b)=3$ if and only if $T_b=3$. Thus, we need to characterize when $T_b=3$ for $b\in\gf_{p^n} \setminus \{0,\pm 1\}$. Such results are given below.

\begin{lemma}\label{repeat} Let $b\in\gf_{p^n} \setminus \{0,\pm 1\}$, and $g_b(x)$ be the polynomial defined as in (\ref{poly-gb}).  Then $g_b(x)=0$ has a multiple root $x_0\in\gf_{p^n}$ if and only if $p \ne 7$ and $\eta(-3)=1$. In this case, the multiple roots $x_0$'s are $-\frac{1}{2}\pm \frac{1}{6}\sqrt{-3}$,  and the corresponding $b$'s are $\mp 3\sqrt{-3}$. \end{lemma}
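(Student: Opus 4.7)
The plan is to use the standard criterion that $x_0$ is a multiple root of $g_b$ if and only if $g_b(x_0)=g_b'(x_0)=0$, and then reduce this pair of equations to a single polynomial condition on $x_0$.

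First I would compute $g_b'(x)=4x^3+6x^2+2x+2b^{-1}$. Since $x_0\ne 0$ (otherwise $g_b(0)=b^{-1}=0$, contradicting $b\ne 0$), I can solve $g_b'(x_0)=0$ for $b^{-1}$, obtaining $b^{-1}=-2x_0^3-3x_0^2-x_0$. Substituting this expression into $g_b(x_0)=0$ and simplifying would produce
\begin{equation*}
-x_0\bigl(3x_0^3+6x_0^2+4x_0+1\bigr)=0.
\end{equation*}
Spotting the root $x_0=-1$ of the cubic, I would factor $3x_0^3+6x_0^2+4x_0+1=(x_0+1)(3x_0^2+3x_0+1)$.

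Next I would dispose of the linear factor. If $x_0=-1$, the formula $b^{-1}=-2x_0^3-3x_0^2-x_0$ yields $b^{-1}=0$, again contradicting $b\ne 0$; so any multiple root must satisfy $3x_0^2+3x_0+1=0$. If $p=3$, this reduces to $1=0$, so no solution exists; note that in $\gf_{3^n}$ we also have $\eta(-3)=0$, consistent with the statement. For $p\ne 3$, the quadratic has discriminant $-3$, so its roots lie in $\gf_{p^n}$ exactly when $\eta(-3)=1$, and they are $x_0=-\tfrac12\pm\tfrac16\sqrt{-3}$, matching the claimed form.

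To recover $b$, I would use $3x_0^2+3x_0+1=0$ to reduce $b^{-1}=-2x_0^3-3x_0^2-x_0$ to a linear expression in $x_0$; writing $x_0^2=-x_0-\tfrac13$ and $x_0^3=\tfrac23 x_0+\tfrac13$ gives $b^{-1}=\tfrac{2x_0+1}{3}$, and plugging in $x_0=-\tfrac12\pm\tfrac16\sqrt{-3}$ yields $b^{-1}=\pm\tfrac{1}{9}\sqrt{-3}$, i.e., $b=\mp 3\sqrt{-3}$. The restriction $b\ne\pm 1$ then forces $9(-3)\ne 1$, i.e., $28\ne 0$ in $\gf_{p^n}$; since $p$ is odd this is equivalent to $p\ne 7$. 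Conversely, if $p\ne 7$ and $\eta(-3)=1$, these explicit $x_0$ and $b$ values satisfy both $g_b(x_0)=0$ and $g_b'(x_0)=0$ by construction, so they give genuine multiple roots in $\gf_{p^n}$.

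The only delicate point, rather than a real obstacle, is the role of $p=7$: the computation yields candidate $b=\mp 3\sqrt{-3}$ regardless, but precisely when $p=7$ one has $-3\equiv 4$ so $\sqrt{-3}=\pm 2$ and $b=\pm 1$, which is excluded by hypothesis. I would make this exclusion explicit so that the if-and-only-if conclusion is transparent.
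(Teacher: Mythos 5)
Your proposal is correct and follows essentially the same route as the paper's proof: use $g_b'(x_0)=0$ to express $b^{-1}=-(2x_0^3+3x_0^2+x_0)$, substitute into $g_b(x_0)=0$ to obtain $x_0(x_0+1)(3x_0^2+3x_0+1)=0$, reduce to $3x_0^2+3x_0+1=0$ (solvable iff $\eta(-3)=1$), and then observe that $p=7$ is exactly the case where the resulting $b=\mp 3\sqrt{-3}$ collides with $\pm 1$. Your explicit handling of the excluded roots $x_0=0,-1$ and of the $p=7$ degeneration is, if anything, slightly more detailed than the paper's.
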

\begin{proof}
If $x_0$ is a multiple root of  $g_b(x)=0$, then $g^{\prime}_b(x_0)=2(2x^3_0+3x^2_0+x_0+b^{-1})=0$, and we have $x_0\neq0,-1$. Hence $b^{-1}=-(2x^3_0+3x^2_0+x_0)$.   Substituting it into the original equation,  we get $$x_0(x_0+1)(3x_0^2+3x_0+1)=0.$$
This together  with $x_0\neq0,-1$ leads to $3x^2_0+3x_0+1=0$. Such $x_0$ exists if and only if $\eta(-3)=1$. Then we have $x_0=-\frac{1}{2}\pm\frac{1}{6}\sqrt{-3}$ and the corresponding $b$'s are $\mp3\sqrt{-3}$.

Moreover, if $p=7$,  then we may take $\sqrt{-3}=2$ since $2^2=-3$, and thus $b=\mp 3 \sqrt{-3}=\pm 1$, a contradiction. Therefore, we need the condition $p\neq 7$ holds.
\end{proof}

\begin{lemma} \label{3:cor1} Let $b\in\gf_{p^n} \setminus \{0,\pm 1\}$. Then $T_b=3$ if and only if $p \ne 7$, $\eta(-3)=\eta(-2)=1$ and $b=\pm3\sqrt{-3}$.
\end{lemma}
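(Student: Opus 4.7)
The plan is to reduce $T_b=3$ to a condition on a remaining quadratic factor of $g_b(x)$. Since $g_b$ has degree $4$, if three of its roots (counted without multiplicity) lay in $\gf_{p^n}$ and were all simple, the fourth would automatically lie in $\gf_{p^n}$ by factoring, giving $T_b=4$. Hence $T_b=3$ forces $g_b$ to have a multiple root. Moreover, a triple root is ruled out: by Lemma \ref{repeat} any multiple root $x_0$ satisfies $3x_0^2+3x_0+1=0$, and then
\begin{equation*}
g_b''(x_0)=12x_0^2+12x_0+2=4(3x_0^2+3x_0+1)-2=-2\ne 0
\end{equation*}
(recall $p$ is odd). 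Thus $T_b=3$ is equivalent to having a factorization $g_b(x)=(x-x_0)^2 q(x)$ in which $q(x)$ is a quadratic with two distinct roots in $\gf_{p^n}$, automatically different from $x_0$ (otherwise $x_0$ would be a root of $g_b$ of order $\ge 3$, contradicting the above).

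Invoking Lemma \ref{repeat} directly gives the necessary conditions $p\ne 7$, $\eta(-3)=1$, and forces $x_0=-\tfrac{1}{2}\pm\tfrac{1}{6}\sqrt{-3}$ with $b=\mp 3\sqrt{-3}$. The conditions $b\ne 0,\pm 1$ hold automatically once $p\ne 7$ is assumed. From Vieta's formulas applied to $g_b$, the two remaining roots have sum $-2-2x_0$ and product $b^{-1}/x_0^2$, so
\begin{equation*}
q(x)=x^2+(2+2x_0)x+\frac{b^{-1}}{x_0^2}, \qquad \Delta_q=(2+2x_0)^2-\frac{4b^{-1}}{x_0^2}.
\end{equation*}
I expect the main technical step to be showing that $\Delta_q$ simplifies to the rational value $8/3$, since at first sight its expression involves $\sqrt{-3}$. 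Writing $u=\sqrt{-3}$ and combining everything over the common denominator $3(1-u)$, the $u$-dependent pieces collapse by virtue of the relation $u^2=-3$ (equivalently $3x_0^2+3x_0+1=0$), yielding $\Delta_q=8/3$ for both choices $b=\pm 3\sqrt{-3}$.

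Finally, $q$ has two distinct roots in $\gf_{p^n}$ if and only if $\eta(8/3)=1$. Since $\eta(8/3)=\eta(2)\eta(3)$, and the standing hypothesis $\eta(-3)=1$ gives $\eta(3)=\eta(-1)$, one obtains $\eta(8/3)=\eta(2)\eta(-1)=\eta(-2)$, so the extra condition is precisely $\eta(-2)=1$. Combined with the conditions from Lemma \ref{repeat}, this yields the stated equivalence in both directions: under $p\ne 7$, $\eta(-3)=\eta(-2)=1$, and $b=\pm 3\sqrt{-3}$, the factorization $g_b=(x-x_0)^2 q(x)$ produces exactly three distinct roots in $\gf_{p^n}$, so $T_b=3$; and conversely any $b$ realizing $T_b=3$ must arise this way.
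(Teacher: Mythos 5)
Your proof is correct and follows essentially the same route as the paper's: reduce to the multiple-root classification of Lemma \ref{repeat}, then test whether the residual quadratic factor (whose discriminant simplifies to $8/3$) splits over $\gf_{p^n}$, which under $\eta(-3)=1$ is equivalent to $\eta(-2)=1$. Your added justifications --- that $T_b=3$ forces a multiple root of $g_b$, and that $g_b''(x_0)=-2\neq 0$ rules out a triple root --- are correct refinements of points the paper handles implicitly or by explicitly exhibiting the other two roots $-\tfrac{1}{2}-\tfrac{1}{6}\sqrt{-3}\pm\tfrac{1}{3}\sqrt{6}$.
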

\begin{proof}If $T_b=3$, then $g_b(x)=0$ must have a multiple root $x_0$ in $\gf_{p^n}$. By Lemma \ref{repeat}, we have $p \ne 7$, $\eta(-3)=1$ and $(x_0,b)=(-\frac{1}{2}+\frac{1}{6}\sqrt{-3},-3\sqrt{-3})$ or $(x_0,b)=(-\frac{1}{2}-\frac{1}{6}\sqrt{-3},3\sqrt{-3})$. For the former case, the equation $g_b(x)=0$ can be written as
\begin{eqnarray} \label{3:eq1} (x-x_0)^2\left(x^2+(1+\frac{1}{3}\sqrt{-{3}})x+(-\frac{1}{2}+\frac{1}{6}\sqrt{-{3}})\right)=0,\end{eqnarray}
which has three solutions in $\gf_{p^n}$ if and only if $\eta((1+\frac{1}{3}\sqrt{-{3}})^2-4(-\frac{1}{2}+\frac{1}{6}\sqrt{-{3}}))=\eta(\frac{8}{3})=1$, that is, $\eta(6)=1$. Since $\eta(-3)=1$, this is equivalent to that $\eta(-2)=1$. It can be checked that in this case the other two solutions of (\ref{3:eq1}) are $-\frac{1}{2}-\frac{1}{6}\sqrt{-3}\pm\frac{1}{3}\sqrt{6}$, which are different from $x_0$. For the latter case $(x_0,b)=(-\frac{1}{2}-\frac{1}{6}\sqrt{-3},3\sqrt{-3})$, the arguments are almost the same. So we omit the details.
\end{proof}

Based on the above results, we can now obtain the value of $\omega_3$ below.
\begin{theorem}\label{w3}Let ${\rm C}_1$ denote the condition that $\eta(2)=-\eta(-7)=1$ and  ${\rm C}_2$ denote  the condition  that $\eta(-3)=\eta(6)=1$ and $p\neq 7$. Then, we have
\begin{eqnarray} \label{3:o31}
\omega_3=\left\{
\begin{array}{ll}
4, & ~\mathrm{both~} {\rm C}_1 \mathrm{ ~and}~{\rm C}_2 \mathrm{ ~hold},\\
2, & ~\mathrm{only ~one ~of ~} {\rm C}_1 \mathrm{ ~and}~{\rm C}_2 \mathrm{ ~holds},\\
0, & ~\mathrm{otherwise}.
\end{array} \right.
\end{eqnarray}
Alternatively, the value $\omega_3$ may be expressed as
\begin{eqnarray} \label{3:o3} \omega_3=\frac{\eta(7)^2\eta(3)^2 }{2} \Big(\big(1+\eta(2)\big) \cdot \big(1-\eta(-7)\big)+\big(1+\eta(-2) \big) \cdot \big(1+\eta(-3)\big)\Big).\end{eqnarray}
\end{theorem}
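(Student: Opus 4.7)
The plan is to count $\{b\in\gf_{p^n}:N(b)=3\}$ by splitting according to whether $b=\pm 1$ or not, and then invoking Lemmas \ref{Lem-g1} and \ref{3:cor1}. In the first case, the relation (\ref{T-and-N}) together with the symmetry $T_{-1}=T_1$ from (\ref{pofTb}) gives $N(\pm 1)=3\Longleftrightarrow T_1=2$. Lemma \ref{Lem-g1} identifies this equivalence with ${\rm C}_1$: $\eta(2)=1$ and $\eta(-7)=-1$. Under ${\rm C}_1$ both $b=1$ and $b=-1$ contribute, giving $2$ to $\omega_3$.

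In the second case ($b\neq\pm 1$) one has $N(b)=3\Longleftrightarrow T_b=3$, and Lemma \ref{3:cor1} forces $p\neq 7$, $\eta(-3)=\eta(-2)=1$, and $b\in\{3\sqrt{-3},-3\sqrt{-3}\}$. Since $\eta(-3)=1$, the condition $\eta(-2)=1$ is equivalent to $\eta(6)=\eta(-2)\eta(-3)=1$, so the hypothesis matches ${\rm C}_2$. A brief disjointness check is required: $\pm 3\sqrt{-3}=\pm 1$ would square to $-27=1$ in $\gf_p$, i.e., $p\mid 28$, impossible for odd $p\neq 7$. Hence the second case contributes exactly $2$ more to $\omega_3$, disjoint from the first.

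Adding the contributions yields (\ref{3:o31}). For the compact form (\ref{3:o3}), the prefactor $\eta(7)^2\eta(3)^2$ vanishes exactly when $p\in\{3,7\}$; I would check directly that $\omega_3=0$ in those cases (for $p=7$, ${\rm C}_1$ fails by Lemma \ref{Lem-g1} since $\eta(-7)=0\neq -1$, and ${\rm C}_2$ is excluded by the $p\neq 7$ clause; for $p=3$ the known spectrum from \cite{XZLH} gives $\omega_3=0$). When $p\notin\{3,7\}$ the prefactor is $1$, while $(1+\eta(2))(1-\eta(-7))$ equals $4$ if ${\rm C}_1$ holds and $0$ otherwise, and $(1+\eta(-2))(1+\eta(-3))$ equals $4$ if ${\rm C}_2$ holds and $0$ otherwise; dividing by $2$ then recovers (\ref{3:o31}). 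The only nontrivial point is the disjointness of the two sources of $b$, which the divisibility argument above settles, so no substantive obstacle arises once Lemmas \ref{Lem-g1} and \ref{3:cor1} are in hand.
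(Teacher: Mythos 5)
Your proposal is correct and follows essentially the same route as the paper: split on $b=\pm 1$ versus $b\neq\pm 1$, reduce to $T_1=2$ via Lemma \ref{Lem-g1} (Condition ${\rm C}_1$) and to $T_b=3$ via Lemma \ref{3:cor1} (Condition ${\rm C}_2$), then verify the compact formula by checking each factor. Your extra disjointness check ($\pm 3\sqrt{-3}\neq\pm 1$ unless $p\mid 28$) is harmless but already implicit, since Lemma \ref{3:cor1} is stated only for $b\notin\{0,\pm 1\}$.
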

\begin{proof} In order to find the value of $\omega_3$, we need to find the frequency of $b \in \gf_{p^n}$ such that $N(b)=3$. There are two cases to consider.

{\it Case 1. $b= \pm1$:} $N(1)=N(-1)=3$ if and only if  $T_1=2$. By Lemma \ref{Lem-g1}, this occurs if and only if $\eta(2)=1$ and $\eta(-7)=-1$, which is Condition ${\rm C}_1$.

{\it Case 2. $b \neq \pm1$:} By Lemma \ref{3:cor1}, $N(b)=T_b=3$ if and only if $p \ne 7,\,\,\,\, \eta(-3)=\eta(-2)=1$ and the corresponding $b'$s are $\pm3\sqrt{-3}$. Here we get Condition ${\rm C}_2$.

Combining these two cases yield the expression of $\omega_3$ in (\ref{3:o31}). As for the expression of $\omega_3$ in (\ref{3:o3}), denote
\begin{eqnarray*} f_1:&=&\eta(7)^2 \eta(3)^2\cdot \big(1+\eta(2)\big) \cdot \big(1-\eta(-7)\big), \\ f_2:&=&\eta(7)^2 \eta(3)^2 \cdot \big(1+\eta(-2) \big) \cdot \big(1+\eta(-3)\big). \end{eqnarray*}
Then (\ref{3:o3}) follows easily from the observation that
\[f_1=\left\{\begin{array}{ll}
4,& ~\mathrm{if~} {\rm C}_1 \mathrm{ ~holds},\\
0,& ~\mathrm{if~} {\rm C}_1 \mathrm{ ~does ~ not ~ hold},
\end{array}\right. \quad f_2=\left\{\begin{array}{ll}
4,& ~\mathrm{if~} {\rm C}_2 \mathrm{ ~holds},\\
0,& ~\mathrm{if~} {\rm C}_2 \mathrm{ ~does ~ not ~ hold},
\end{array}\right.\]
This finishes the proof of Theorem \ref{w3}.
\end{proof}

\subsection{The value of $\omega_2$}
This subsection is devoted to the computation of $\omega_2$.  Recall the basic facts in (\ref{T-and-N}) and (\ref{pofTb}). First, we prove the following useful result.

\begin{lemma}\label{A-and-B} Let $p\geq 3$ and let $T_b$ be the number of roots of the polynomial $g_b(x)\in \gf_{p^n}[x]$ defined as in (\ref{poly-gb}). Define
two sets
\begin{equation}\label{defA}
\mathcal{A}=\{a\in\gf_{p^n}\,\,\mid\,\,\eta(a^2-4)=1\,\,\,\,\mbox{and}\,\,\,\,\eta(-3a^2-4)=-1\},
\end{equation}
and
\begin{equation}\label{defB}
\mathcal{B}=\{b\in \gf_{p^n}^*\,\,\mid\,\,T_b=2\}.
\end{equation}Then, there is a one-to-one correspondence between the elements $b\in \mathcal{B}$ and the elements $a\in \mathcal{A}$. Moreover,
if $\eta(2)=1$ and $\eta(-7)=-1$, then $\pm1 \in \mathcal{B}$ and the corresponding $a$'s belong to $\{\pm 2\sqrt{2}\}$.\end{lemma}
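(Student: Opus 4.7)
The plan is to exhibit an explicit bijection $\Phi:\mathcal{B}\to\mathcal{A}$ via the depressed form of $g_b$. First, I would make the substitution $y=2x+1$, under which $g_b(x)=0$ becomes
\[
h_b(y):=y^4-2y^2+\tfrac{16}{b}\,y+1=0,
\]
an equivalence on $\gf_{p^n}$-solutions, so $T_b$ counts the $\gf_{p^n}$-roots of $h_b$. Using Lemma~\ref{repeat} together with the observation (already appearing in the proof of Lemma~\ref{3:cor1}) that the remaining quadratic in the multiple-root factorization has discriminant $8/3\neq 0$, I would conclude that any multiple-root case yields $T_b\in\{1,3\}$, never $2$; hence $b\in\mathcal{B}$ forces $h_b$ to be squarefree over $\gf_{p^n}$ with factorization type $(2,1,1)$.

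Because $h_b$ has no $y^3$-term, its unique factorization of this type must take the shape
\[
h_b(y)=(y^2-ay+p)(y^2+ay+q),
\]
where $a=y_1+y_2$ and $p=y_1y_2$ are the symmetric functions of the two rational roots, and the second factor is irreducible. Comparing coefficients yields $p+q=a^2-2$, $a(p-q)=16/b$, and $pq=1$, and a short calculation then gives the key identity
\[
(a^2-4p)(a^2-4q)=(a^2-4)(-3a^2-4).
\]
From $\eta(a^2-4p)=\eta((y_1-y_2)^2)=1$ and $\eta(a^2-4q)=-1$ (second factor irreducible), together with the easy checks $a\neq 0$ and $\eta(a^2-4)=1$ (the latter because the discriminant $a^2(a^2-4)$ of $t^2-(a^2-2)t+1$, whose roots are $p,q\in\gf_{p^n}$, must be a square), the identity forces $\eta(-3a^2-4)=-1$, so $a\in\mathcal{A}$. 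Set $\Phi(b):=a$.

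For the inverse $\Psi:\mathcal{A}\to\mathcal{B}$, starting from $a\in\mathcal{A}$ I would let $p_\pm=\tfrac12\bigl((a^2-2)\pm a\sqrt{a^2-4}\bigr)\in\gf_{p^n}$ (available because $\eta(a^2-4)=1$) and reuse the identity to get
\[
\eta(a^2-4p_+)\cdot\eta(a^2-4p_-)=\eta\bigl((a^2-4)(-3a^2-4)\bigr)=-1.
\]
Thus exactly one of $p_\pm$, say $p$, satisfies $\eta(a^2-4p)=1$; set $q$ to be the other. Then $b:=16/\bigl(a(p-q)\bigr)\in\gf_{p^n}^*$, and by construction $h_b=(y^2-ay+p)(y^2+ay+q)$ has type $(2,1,1)$, so $T_b=2$ and $\Phi(b)=a$. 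The identity $\Psi\circ\Phi=\mathrm{id}$ then follows from uniqueness of the factorization used to define $\Phi$. The main technical hurdle here is precisely this sign analysis: one must confirm that the product identity unambiguously singles out the correct $p$, so that $\Psi$ is well defined as a single-valued map.

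Finally, for the second assertion, when $\eta(2)=1$ and $\eta(-7)=-1$ Lemma~\ref{Lem-g1} yields $T_1=2$, so $1\in\mathcal{B}$, and $-1\in\mathcal{B}$ follows from $T_b=T_{-b}$. Tracing the proof of Lemma~\ref{Lem-g1}, the two $\gf_{p^n}$-roots of $g_1$ come from exactly one of the quadratics $x^2+(1\pm\sqrt 2)x+1=0$, so their sum is $-(1\pm\sqrt 2)$; translating via $y=2x+1$ gives $a=2(r_1+r_2)+2=\mp 2\sqrt 2$. Since the involution $x\mapsto -x-1$ sends solutions of $\mathbb{D}_1 F(x)=1$ to solutions of $\mathbb{D}_1 F(x)=-1$ and acts as $y\mapsto -y$ on the $y$-side, it negates the associated $a$, so the two $a$'s corresponding to $b\in\{\pm 1\}$ together form $\{\pm 2\sqrt 2\}$, as claimed.
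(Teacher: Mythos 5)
Your proof is correct and follows essentially the same route as the paper's: the substitution $y=2x+1$, the factorization of the depressed quartic into $(y^2-ay+c)(y^2+ay+c^{-1})$ with one split and one irreducible factor, and the discriminant identity $(a^2-4c)(a^2-4c^{-1})=(a^2-4)(-3a^2-4)$ are exactly the paper's argument for the bijection. The only (harmless) variation is in the final claim, where you identify the $a$'s attached to $b=\pm1$ by tracing the roots of $g_1$ through Lemma \ref{Lem-g1} and the involution $x\mapsto -x-1$, whereas the paper derives $(a^2-8)(a^4+4a^2+32)=0$ from (\ref{acb}) and rules out the second factor via $\eta(-7)=-1$; both yield $a=\pm2\sqrt{2}$.
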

\begin{proof}
For $b \in \mathcal{B}$, the proofs of Lemmas \ref{repeat} and \ref{3:cor1} imply that $g_b(x)=0$ can not have multiple roots, so it has exactly two distinct simple roots in $\gf_{p^n}$. Putting $y=2x+1$ in (\ref{poly-gb}), $g_b(x)=0$ becomes
\begin{eqnarray}\label{eqa-t-y}
y^4-2y^2+16b^{-1}y+1=0,
\end{eqnarray}
which also has exactly two distinct simple roots in $\gf_{p^n}$ for each $b\in \mathcal{B}$. Thus, we can factor (\ref{eqa-t-y}) into the form
\begin{equation}\label{y1y2f}
(y^2+ay+c)(y^2-ay+c^{-1})=0,
\end{equation}
where the pair $(a,c)$ satisfies the following conditions
\begin{enumerate}
\item $a\in \gf_{p^n}^*$, $c\in\gf_{p^n}^*$;

\item $y^2+ay+c$ is irreducible over $\gf_{p^n}$, that is, $\eta(a^2-4c)=-1$;

\item $y^2-ay+c^{-1}$ has two distinct roots in $\gf_{p^n}$, that is, $\eta(a^2-4c^{-1})=1$;

\item
%comparing  (\ref{eqa-t-y}) with (\ref{y1y2f}), we also have
\begin{eqnarray}\label{acb}
\left\{
\begin{array}{lllll}
c+c^{-1}&=a^2-2,\\
a(c-c^{-1})&=-16b^{-1},\\
\end{array} \right.\ \
\end{eqnarray}
\end{enumerate}
which is obtained by comparing  (\ref{eqa-t-y}) with (\ref{y1y2f}).
This gives the correspondence from $b\in \mathcal{B}$ to the pairs $(a,c)$ satisfying the above conditions. Once $b\in \mathcal{B}$ is given, the two solutions of
(\ref{eqa-t-y}) are uniquely determined and so are the pair $(a,c)$ and the element $a$. This shows that for each $b\in \mathcal{B}$, there  exists a unique $a$ satisfying the conditions in $1)-4)$. Now we verify that this $a\in \mathcal{A}$. For this $a$, (\ref{acb}) implies that $c+c^{-1}=a^2-2$ has two distinct roots $c \ne c^{-1} \in \gf_{p^n}$, so we have $\eta((a^2-2)^2-4)=\eta(a^2(a^2-4))=1$, that is, $\eta(a^2-4)=1$ and $a\in \gf_{p^n}^*$. On the other hand, from
\begin{equation}\label{disc-mult}-1=\eta((a^2-4c)(a^2-4c^{-1}))=\eta(a^4-4a^2(c+c^{-1})+16)=\eta((a^2-4)(-3a^2-4)),\end{equation}
we derive that $\eta(-3a^2-4)=-1$. This shows that $a$ indeed belongs to $\mathcal{A}$.

Now suppose that $a \in \mathcal{A}$. We show that $c$ and $b$ are all uniquely determined by this $a$ according to (\ref{y1y2f}), and $b \in \mathcal{B}$.

First, since $\eta(a^2-4)=1$, the first equation of (\ref{acb}) has two distinct solutions $c_1, c_2 \in \gf_{p^n}$, and we have $\eta((a^2-4c_1)(a^2-4c_2))=-1$ due to (\ref{disc-mult}). We may assume $\eta(a^2-4c_1)=-1$. Then we take $c=c_1$. This is the desired $c$ in (\ref{y1y2f}) such that $y^2+ay+c$ is irreducible over $\gf_{p^n}$ and $y^2-ay+c^{-1}$ is reducible with two distinct roots in $\gf_{p^n}$. Choosing $b$ according to the second equation of (\ref{acb}), we can obtain
\[ y^4-2y^2+16b^{-1}y+1=(y^2+ay+c)(y^2-ay+c^{-1})=0,\]
which has exactly two roots in $\gf_{p^n}$. This shows that $b \in \gf_{p^n}^*$ is uniquely determined by $a$ and it satisfies $T_b=2$. This finishes the proof of the first part of Lemma \ref{A-and-B}.

As for the second part, when $\eta(2)=1$ and $\eta(-7)=-1$, by Lemma \ref{Lem-g1} and (\ref{pofTb}), we have $T_1=T_{-1}=2$ and thus $\pm 1\in \mathcal{B}$.   Then from  (\ref{acb}) we obtain
\[16^2=(-16b^{-1})^2=a^2(c-c^{-1})^2=a^2((a^2-2)^2-4),\]
that is,
 \begin{equation*}\label{ex-equ} (a^2-8)(a^4+4a^2+32)=0,\end{equation*}
which implies that $a^4+4a^2+32=0$ or $a^2=8$. If $a^4+4a^2+32=0$, then we have $\eta(4^2-4 \cdot 32)=\eta(-7)=1$, a contradiction. Hence we have $a^2=8$ and it can be easily verified that the corresponding two $a$'s indeed belong to $\mathcal{A}$. This proves the second part of Lemma \ref{A-and-B}.
\end{proof}

Now we can obtain the value of $\omega_2$ in the following theorem.

\begin{theorem} \label{w2} For $p \ge 3$, we have
\begin{eqnarray*}
\omega_2=\left\{
\begin{array}{cl}
0,&~ \mathrm{if}~p=3~\mathrm{and~}n~\mathrm{is~even},\\
\frac{3^n-3}{2},&~ \mathrm{if}~p=3~\mathrm{and~}n~\mathrm{is~odd},\\
A+2,&~ \mathrm{if}~p=7~\mathrm{and~}n~\mathrm{is~odd},\\
A-2,&~ \mathrm{if}~\eta(2)=1~\mathrm{and~}\eta(-7)=-1,\\
A, &~\mathrm{otherwise}.\\
\end{array} \right.\ \
\end{eqnarray*}
where $
A=\frac{1}{4}\big(p^n-5-\lambda_{1,p^n}-\eta(-3)+2\eta(-1)\big)
$ with $\lambda_{1,p^n}$ being defined as in (\ref{2:f1}).
\end{theorem}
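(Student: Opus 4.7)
The plan is to compute $\omega_2$ by combining the characterization in \eqref{T-and-N} with the bijection of Lemma \ref{A-and-B} and the character sum $\lambda_{1,p^n}$. First I would split
\[\omega_2 \;=\; \bigl|\{b \in \gf_{p^n}^* \setminus \{\pm 1\} : T_b = 2\}\bigr| \;+\; \bigl|\{b \in \{\pm 1\} : T_b = 1\}\bigr|.\]
The second summand is $2$ when $p = 7$ and $n$ is odd, and $0$ otherwise, directly from Lemma \ref{Lem-g1}. Lemma \ref{A-and-B} reduces the first summand to $|\mathcal{A}|$ after subtracting the contribution of those $b \in \{\pm 1\}$ that happen to lie in $\mathcal{B}$. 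By the second part of that lemma, $\pm 1 \in \mathcal{B}$ precisely when $\eta(2) = 1$ and $\eta(-7) = -1$, in which case their preimages in $\mathcal{A}$ are the two elements $\pm 2\sqrt{2}$, forcing the correction $-2$.

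The core computation would then be the evaluation of $|\mathcal{A}|$ for $p \ge 5$ by expanding the quadratic character indicators:
\[4\,|\mathcal{A}| \;=\; \sum_{a \in \gf_{p^n}} \bigl(1 + \eta(a^2 - 4)\bigr)\bigl(1 - \eta(-3a^2 - 4)\bigr) \;-\; E_p,\]
where $E_p$ collects the boundary overcounts coming from the points where one of the two characters vanishes, namely $a = \pm 2$ (always) and, when $\eta(-3) = 1$, the two solutions of $a^2 = -4/3$. Expanding the main sum, Lemma \ref{charactersumquadratic} yields $\sum_a \eta(a^2 - 4) = -1$ and $\sum_a \eta(-3a^2 - 4) = -\eta(-3)$, while the mixed quartic sum is $\lambda_{1,p^n}$ by definition \eqref{2:f1}. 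Evaluating the characters at the boundary points ($\eta(-16) = \eta(-1)$ at $a = \pm 2$ and $\eta(-16/3) = \eta(-3)$ at the others) and doing a brief case split on $\eta(-3) \in \{\pm 1\}$ yields the unified closed form $|\mathcal{A}| = A$.

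Combining the three pieces then produces the three non-degenerate cases of the theorem. The case $p = 3$ must be handled separately because $-3a^2 - 4 = -4$ is a nonzero constant, so the character sum $\lambda_{1,3^n}$ is no longer a ``genuine'' quartic sum and the expansion above degenerates. In this case $\mathcal{A}$ collapses to a set controlled only by $\eta(a^2 - 1)$ and $\eta(-1)$: for $n$ even one has $\eta(-1) = \eta(-4) = 1$ so the condition $\eta(-3a^2-4)=-1$ forces $\mathcal{A} = \emptyset$, while for $n$ odd one has $\mathcal{A} = \{a \in \gf_{3^n} : \eta(a^2 - 1) = 1\}$, whose cardinality $(3^n - 3)/2$ follows from Lemma \ref{charactersumquadratic} applied to $a^2 - 1$. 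In both $p=3$ subcases, $T_{\pm 1} \ne 1,2$, so no further correction is needed.

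The main technical obstacle will be the bookkeeping at the boundary points where $\eta$ vanishes: these points must be tracked consistently both in the bijection of Lemma \ref{A-and-B} (where one must check that $\pm 2$, $\pm\sqrt{-4/3}$, and the roots of $a^2 - 4 = 0$ do not inadvertently produce spurious elements of $\mathcal{B}$) and in the character sum expansion (where the naive indicator $\tfrac{1}{4}(1+\eta)(1-\eta)$ overcounts at precisely these points), so that the two subtractions cancel correctly and the clean formula $|\mathcal{A}| = A$ emerges independently of $\eta(-3)$.
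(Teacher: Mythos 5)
Your proposal is correct and follows essentially the same route as the paper: split off the $b=\pm1$ contribution via Lemma \ref{Lem-g1}, reduce the remaining count to $|\mathcal{A}|$ through the bijection of Lemma \ref{A-and-B} with the $-2$ correction when $\pm1\in\mathcal{B}$, and evaluate $|\mathcal{A}|$ by expanding the character indicators with boundary corrections at $a^2=4$ and $a^2=-4/3$ to get $A$. The only cosmetic difference is in the $p=3$, $n$ odd subcase, where you count $\{a:\eta(a^2-1)=1\}$ directly via Lemma \ref{charactersumquadratic} rather than citing cyclotomic numbers as the paper does; both give $\frac{3^n-3}{2}$.
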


\begin{proof}
If $N(1)=N(-1)=2$, then $T_1=T_{-1}=1$. By Lemma \ref{Lem-g1}, this occurs only when $p=7$ and $n$ is odd.
Now we need to consider the number of $b\in\gf_{p^n}\setminus \{0,\pm1\}$ such that $T_b=2$, by Lemma \ref{A-and-B} which is related to the cardinality of the set $\mathcal{A}$  defined in (\ref{defA}) or $\mathcal{B}$ in (\ref{defB}). We distinguish the following two cases:

\textbf{\emph{Case 1}:} $p=3$. Then, we have $T_{\pm1}\neq 1$, $\pm 1\notin \mathcal{B}$ and the set $\mathcal{A}$ defined in (\ref{defA}) becomes  \begin{eqnarray*}
\mathcal{A}=\{a\in\gf_{3^n}~|~\eta(a^2-1)=1~\mathrm{and}~\eta(-1)=-1\}.\end{eqnarray*}
Therefore, in this case we have $$\omega_2=|\mathcal{B}|=|\mathcal{A}|.$$
If $n$ is even, then $\eta(-1)=1$ and thus $|\mathcal{A}|=0$. Otherwise, we have
\begin{eqnarray*}
\mathcal{A}=\{a\in\gf_{3^n}^*~|~\eta(a^2-1)=1\},\end{eqnarray*}
and by the cyclotomic numbers used in \cite{XZLH}, we have $$|\mathcal{A}|=\frac{3^n-3}{2}.$$
Thus, in this case, we have
\begin{eqnarray*}
\omega_2=\left\{
\begin{array}{cl}
0,&~ \mathrm{if}~n~\mathrm{is~even},\\
\frac{3^n-3}{2},&~\mathrm{if}~~n~\mathrm{is~odd}.\\
\end{array} \right.\ \
\end{eqnarray*}

\textbf{\emph{Case 2}:}  $p\geq 5$. Then, by Lemma \ref{A-and-B} we find that
\[\left|\{b \in \gf_{p^n} \setminus \{0, \pm 1\}: T_b=2\}\right| =\left\{
\begin{array}{lllll}
|\mathcal{A}|-2, ~ \mathrm{if}~\eta(2)=1~\mathrm{and~}\eta(-7)=-1,\\
|\mathcal{A}|, ~\mathrm{otherwise}.
\end{array}\right.\]
This shows that when $p\geq 5$

\begin{eqnarray}\label{w2-derive}
\omega_2=\left\{
\begin{array}{ll}
|\mathcal{A}|-2, ~ \mathrm{if}~\eta(2)=1~\mathrm{and~}\eta(-7)=-1,\\
|\mathcal{A}|+2, ~ \mathrm{if}~p=7~\mathrm{and~}n~\mathrm{is~odd},\\
|\mathcal{A}|, ~\mathrm{otherwise}.
\end{array} \right.\ \
\end{eqnarray}
Now it suffices to determine the cardinality of $\mathcal{A}$. Actually, since $p \ge 5$,
\begin{equation*}\begin{array}{lcl}|\mathcal{A}|
%&=\frac{1}{4}\sum\limits_{a\in\gf_{p^n}}\left(1+\eta(a^2-4)\right)\left(1-\eta(-3a^2-4)\right)-\frac{1}{2}\left(1-\eta(-1)\right)-\frac{1}{2}\left(1+\eta(-3)\right)\\
&=&\frac{1}{4}\sum\limits_{a\in\gf_{p^n},a^2\neq4,-\frac{4}{3}}\left(1+\eta(a^2-4)\right)\left(1-\eta(-3a^2-4)\right)\\
&=&\frac{1}{4}\sum\limits_{a\in\gf_{p^n}}\left(1+\eta(a^2-4)\right)\left(1-\eta(-3a^2-4)\right)-1+\frac{1}{2}\eta(-1)-\frac{1}{2}\eta(-3)\\
&=&\frac{1}{4}\big[\sum\limits_{a\in\gf_{p^n}}1+\sum\limits_{a\in\gf_{p^n}}\eta(a^2-4)-\sum\limits_{a\in\gf_{p^n}}\eta(-3a^2-4)\\
&&-\sum\limits_{a\in\gf_{p^n}}\eta((a^2-4)(-3a^2-4))\big]-1+\frac{1}{2}\eta(-1)-\frac{1}{2}\eta(-3).
\end{array}
\end{equation*}
By using $\sum\limits_{a\in\gf_{p^n}}\eta(a^2-4)=-1$, $\sum\limits_{a\in\gf_{p^n}}\eta(-3a^2-4)=-\eta(-3)$ from Lemma \ref{charactersumquadratic}, and noting that $\lambda_{1,p^n}=\sum_{a\in\gf_{p^n}}\eta\left((a^2-4)(-3a^2-4)\right)$ which has been evaluated in Theorem \ref{thm gama}, we obtain $|\mathcal{A}|=A$.  Then the desired result follows from (\ref{w2-derive}).   This completes the proof of Theorem \ref{w2}.
\end{proof}

%\begin{remark}
%Theorem \ref{w2} gives the value of $\omega_2$ in the differential spectrum $\mathbb{S}$ of $x^{p^n-3}$. For $p=3$, the  value of $\omega_2$ is given explicitly in terms of $n$. For any $p\ge 5$,  the  value of $\omega_2$ is expressed in terms of the quadratic character sum $\lambda_{1,p^n}$ (see $A$ in Theorem \ref{w2}), which has been evaluated by the theory of elliptic curves in Theorem \ref{thm gama} and can be expressed in terms of $n$.
%\end{remark}

Based on the previous results and the identities in (\ref{identity}) and (\ref{i2omega}), we can obtain the following main result about the differential spectrum of $x^{p^n-3}$.

\begin{theorem}\label{spectrum}Let $\mathbb{S}=[\omega_0, \omega_1, \ldots, \omega_5]$ be the differential spectrum of $F(x)=x^{p^n-3}$. Then we have
\begin{eqnarray}\label{dfequation}
\left\{
\begin{array}{lllll}
\omega_0&=\frac{M-2p^{2n}+p^n}{4(p^n-1)}+\frac{1}{2}\omega_2+\frac{1}{2}\omega_3-\omega_5,\\
\omega_1&=\frac{-M+5p^{2n}-4p^n}{3(p^n-1)}-\frac{4}{3}\omega_2-\omega_3+\frac{5}{3}\omega_5,\\
\omega_4&=\frac{M-2p^{2n}+p^n}{12(p^n-1)}-\frac{1}{6}\omega_2-\frac{1}{2}\omega_3-\frac{5}{3}\omega_5,\\
\end{array} \right.\ \
\end{eqnarray}
where $\omega_5$, $\omega_3$ and $\omega_2$ are given in Theorems \ref{w5}, \ref{w3} and \ref{w2}, respectively, and $M$ is given in Theorem \ref{L4}.
\end{theorem}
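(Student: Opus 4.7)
The plan is to treat Theorem \ref{spectrum} as a routine linear-algebraic consequence of the three known identities satisfied by the spectrum. By the first identity in (\ref{identity}) we have $\sum_{i=0}^{5}\omega_i=p^n$, by the second $\sum_{i=0}^{5}i\,\omega_i=p^n$, and by (\ref{i2omega}) we have $\sum_{i=0}^{5}i^2\omega_i=(M-p^{2n})/(p^n-1)$, where the value of $M$ is supplied by Theorem \ref{L4}. Since $\omega_2$, $\omega_3$ and $\omega_5$ have already been determined in Theorems \ref{w2}, \ref{w3} and \ref{w5}, moving these terms to the right-hand sides reduces the three identities to a linear system in the three unknowns $\omega_0$, $\omega_1$, $\omega_4$.

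Concretely, I would rewrite the system as
\begin{align*}
\omega_0+\omega_1+\omega_4 &= p^n-\omega_2-\omega_3-\omega_5,\\
\omega_1+4\omega_4 &= p^n-2\omega_2-3\omega_3-5\omega_5,\\
\omega_1+16\omega_4 &= \tfrac{M-p^{2n}}{p^n-1}-4\omega_2-9\omega_3-25\omega_5.
\end{align*}
Subtracting the second equation from the third immediately isolates $12\,\omega_4$, and after using the identity
\[
\frac{M-p^{2n}}{p^n-1}-p^n \;=\; \frac{M-2p^{2n}+p^n}{p^n-1},
\]
I obtain the formula for $\omega_4$ stated in (\ref{dfequation}). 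Substituting this back into the second equation and simplifying $p^n-\tfrac{M-2p^{2n}+p^n}{3(p^n-1)}=\tfrac{-M+5p^{2n}-4p^n}{3(p^n-1)}$ gives the formula for $\omega_1$, and finally the first equation yields $\omega_0$ after a last rearrangement.

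There is no real obstacle here: the argument is an exercise in $3\times 3$ linear algebra, and all the genuinely substantive input (the evaluations of $\omega_2,\omega_3,\omega_5$ and of $M$, which required the quadratic character sums and the elliptic curve $E$) has already been handled in the preceding sections. The only thing to be careful about is the bookkeeping with the common denominators $12(p^n-1)$, $3(p^n-1)$ and $4(p^n-1)$, and the algebraic simplifications that turn $p^n-\tfrac{M-2p^{2n}+p^n}{3(p^n-1)}$ and $p^n+\tfrac{M-6p^{2n}+5p^n}{4(p^n-1)}$ into the compact numerators displayed in (\ref{dfequation}).
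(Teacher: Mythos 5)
Your proposal is correct and follows exactly the route the paper takes: the paper's own (essentially omitted) proof is precisely to plug the already-computed $\omega_2,\omega_3,\omega_5$ and $M$ into the two identities of (\ref{identity}) together with (\ref{i2omega}) and solve the resulting $3\times 3$ linear system for $\omega_0,\omega_1,\omega_4$. Your elimination steps and the simplifications of the numerators all check out.
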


\begin{remark}
	Applying Theorem \ref{spectrum}, the differential spectrum $\mathbb{S}$ of $x^{p^n-3}$  for any odd prime $p\geq 3$  can be completely determined. To be more concrete, for each given prime $p\geq3$, one first compute the exact values of $\omega_5$, $\omega_3$, $\omega_2$ and $M$:
\begin{itemize}
\item the values of $\omega_5$ and $\omega_3$ can be derived from Theorems \ref{w5} and \ref{w2} respectively after calculating the quadratic character of some specified elements;

\item the value of $\omega_2$ is given in Theorem \ref{w2}. For $p=3$, it is already given explicitly. For any $p\ge 5$,  $\omega_2$ is expressed in terms of the quadratic character sum $\lambda_{1,p^n}$, which has been evaluated in Theorem \ref{thm gama}.

\item the value of $M$ shown in Theorem \ref{L4} is related to the quadratic character sum $\lambda_{2,p^n}$.  For any prime $p\geq3$,
one can explicitly express the parameter $M$ in terms of $n$ by utilizing (\ref{pre:p=3}), Theorem \ref{thm gama}, Lemma \ref{Lem-g1} and Theorem \ref{L4}.
\end{itemize}
Then, the differential spectrum $\mathbb{S}$ can be computed  via (\ref{dfequation}), and one can express it explicitly in terms of $n$.
%It is worth noting that $\omega_2$ in Theorem \ref{w2} is dependent on the quadratic character sum $\lambda_{1,p^n}$ in the cases $p\ge 5$, and $M$ in Proposition \ref{L4} relies on $\lambda_{2,p^n}$ and $T_1$ from Lemma \ref{Lem-g1}.
\end{remark}

We provide the following results to illustrate Theorem \ref{spectrum}. The first one is about the case $p=3$, which has been investigated in \cite{XZLH} with a different method.

\begin{corollary}\label{p=3ds} Let $p=3$ and let $\mathbb{S} = [\omega_0, \omega_1, \ldots, \omega_5]$  be the differential spectrum of the power mapping $x^{p^n-3}$ over $\gf_{3^n}$. Then,

\noindent ({\rm i}) when $n$ is odd, $$
\mathbb{S}=
%\left[\frac{3^n-3}{2}, 3, \frac{3^n-3}{2}, 0, 0, 0\right];
\left[\omega_0=\frac{3^n-3}{2},\,\,\omega_1=3,\,\,\omega_2=\frac{3^n-3}{2},\,\,\omega_3=0,\,\omega_4=0,\,\,\omega_5=0\right];
$$

\noindent ({\rm ii}) when $n \equiv 2\,\,({\rm mod}\,\,4)$, $$\mathbb{S}=\left[\omega_0=\frac{3^n-9}{4},\,\,\omega_1=2\cdot3^{n-1}+3,\,\,\omega_2=0,\,\,\omega_3=0,\,\,\omega_4=\frac{3^{n-1}-3}{4},\,\,\omega_5=0\right];$$

\noindent ({\rm iii}) when $n \equiv 0\,\,({\rm mod}\,\,4)$, $$\mathbb{S}=\left[\omega_0=\frac{3^n-1}{4},\,\,\omega_1=2\cdot 3^{n-1}+1,\,\omega_2=0,\,\,\omega_3=0,\,\,\omega_4=\frac{3^{n-1}-11}{4},\,\,\omega_5=2\right].$$

\end{corollary}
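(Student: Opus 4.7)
The plan is to apply Theorem \ref{spectrum} with $p=3$, specialising each ingredient $\omega_5$, $\omega_3$, $\omega_2$, and $M$ to characteristic $3$, and then simplifying the system \eqref{dfequation}. The key characteristic-$3$ reductions I would record first are: $2\equiv -1$, so $\eta(2)=\eta(-1)$; $-7\equiv 2\pmod 3$, so $\eta(-7)=\eta(2)$; and $-3\equiv 0$, so $\eta(-3)=0$. Moreover $\eta(-1)=1$ in $\gf_{3^n}$ iff $3^n\equiv 1 \pmod 4$, i.e.\ iff $n$ is even. These instantly give $\omega_3=0$ (both $C_1$ and $C_2$ in Theorem \ref{w3} fail), and Theorem \ref{w2} yields $\omega_2=0$ for $n$ even and $\omega_2=(3^n-3)/2$ for $n$ odd.

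Next I would determine $\omega_5$ and $T_1$ via Theorem \ref{w5} and Lemma \ref{Lem-g1}. For $n$ odd, $\eta(2)=\eta(-1)=-1$ forces $T_1=0$ and $\omega_5=0$. For $n$ even, $\eta(2)=\eta(-7)=1$, so everything hinges on $\eta(-1+2\sqrt 2)$. Writing $\sqrt 2=\sqrt{-1}=:i$ in characteristic $3$, one has $-1+2\sqrt 2=-1-i$, and a direct computation shows $(-1-i)^{2}=-i$ and $(-1-i)^{4}=-1$, so $-1-i$ has multiplicative order $8$. Applying the dichotomy in the Remark after Lemma \ref{Lem-g1} (with $\left(\tfrac{2}{3}\right)=-1$, forcing $n$ even), squareness of $-1-i$ in $\gf_{3^n}$ is equivalent to its squareness in $\gf_9$ or $4\mid n$. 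Since $\gf_9^*$ is cyclic of order $8$ and $-1-i$ is a generator, it is not a square in $\gf_9$. Hence $T_1=4$ and $\omega_5=2$ precisely when $n\equiv 0\pmod 4$, while $T_1=0$ and $\omega_5=0$ when $n\equiv 2\pmod 4$.

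With $T_1$ and $\lambda_{2,3^n}=-1-\eta(2)$ (from \eqref{pre:p=3}) in hand, I would substitute into Theorem \ref{L4} to obtain three closed expressions for $M$:
\begin{align*}
n\text{ odd:}\quad & M=1+(3^n-1)(3^{n+1}-2),\\
n\equiv 2\pmod 4:\quad & M=1+(3^n-1)(3^{n+1}-8),\\
n\equiv 0\pmod 4:\quad & M=1+(3^n-1)(3^{n+1}+8).
\end{align*}
A short factorisation in each case (showing $p^n-1$ divides the relevant numerators after substitution) allows me to simplify $\omega_0$, $\omega_1$, and $\omega_4$ from \eqref{dfequation}. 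This should yield exactly the three spectra in (i)--(iii); in particular, the case $n$ odd produces $(3^n-3)/2,\,3,\,(3^n-3)/2,\,0,\,0,\,0$ as a quick sanity check.

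The main obstacle is the quadratic-character computation in the even case: pinning down $\eta(-1+2\sqrt 2)=\eta(-1-i)$ in $\gf_{3^n}$ requires identifying $-1-i$ as an element of order $8$ and then invoking the order structure of $\gf_9^*$ together with the $\gf_{p^2}$-versus-$\gf_{p^n}$ dichotomy from the Remark. Once this is settled, the rest is routine bookkeeping: the character-sum values collapse to elementary expressions, and the three arithmetic cases can be handled uniformly by the same substitution into \eqref{dfequation}.
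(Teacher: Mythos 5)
Your proposal is correct and follows essentially the same route as the paper: specialize $\omega_5,\omega_3,\omega_2$ and $M$ to $p=3$ (your three values of $M$ agree exactly with the paper's) and substitute into Theorem \ref{spectrum}. The only cosmetic difference is that you settle $\eta(-1+2\sqrt{2})$ by exhibiting $-1-i$ as an element of order $8$ in $\gf_9^*$, whereas the paper invokes the irreducibility of $x^4+2x^2+2$ over $\gf_3$; both are sanctioned by Remark \ref{remark of T1} and give the same conclusion.
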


\begin{proof} For $p=3$, by Theorem \ref{w5}, we have $\omega_5=0$ if $n$ is odd or $n \equiv 2\,\,({\rm mod}\,\,4)$, and $\omega_5=2$ if $n \equiv 0\,\,({\rm mod}\,\,4)$ since in this case $\left(x^2+1\right)^2-8=x^4+2x^2+2$ is irreducible over $\gf_{3}$. By (\ref{3:o3}) in Theorem \ref{w3}, we have $\omega_3=0$. By Theorem \ref{w2}, we have $\omega_2=0$ if $n$ is even, and $\omega_2=\frac{3^n-3}{2}$ if $n$ is odd. By Theorem \ref{L4}, we get $M=1+(3^n-1)(3^{n+1}-2)$ if $n$ is odd, $M=1+(3^n-1)(3^{n+1}-8)$ if $n \equiv 2\,\,({\rm mod}\,\,4)$, and $M=1+(3^n-1)(3^{n+1}+8)$ if $n \equiv 0\,\,({\rm mod}\,\,4)$. Then, we should distinguish three cases and substituting the corresponding values into (\ref{dfequation}), the differential spectrum $\mathbb{S}$ is derived.
\end{proof}

\begin{remark}

For the case $p=3$, based on the characteristic property, the work of \cite{XZLH}  calculated $\omega_4$ directly instead of investigating
 the parameter $M$. However, their method in \cite{XZLH}  doesn't seem to work for general case $p\ge 5$.
The approach in the present paper works for all odd primes.
\end{remark}

\begin{corollary}\label{p=5ds} Let $p=5$ and $\Gamma_{5,n}=-\left(-1 + 2 \sqrt{-1}\right)^n-\left(-1 - 2 \sqrt{-1}\right)^n$ obtained from Example \ref{examp1}. Then, the differential spectrum $\mathbb{S}$ of $x^{p^n-3}$ is shown  as follows:

\noindent ({\rm i}) when $n$ is odd, $\mathbb{S}$ is given by
$$\left[\omega_0=\frac{3\cdot 5^n+\Gamma_{5,\,n}-17}{8},\,\omega_1=\frac{5^n+10}{3},\,\omega_2=\frac{5^n-\Gamma_{5,\,n}-3}{4},\,\omega_3=0,\,\omega_4=\frac{5^n+3\cdot\Gamma_{5,\,n}-11}{24},\,\,\omega_5=0\right];$$

\noindent ({\rm ii}) when $n \equiv 2\,\,({\rm mod}\,\,4)$, $\mathbb{S}$ is given by $$\left[\omega_0=\frac{3\cdot 5^n+\Gamma_{5,\,n}-17}{8},\,\omega_1=\frac{5^n+8}{3},\,\omega_2=\frac{5^n-\Gamma_{5,\,n}-3}{4},\,\omega_3=2,\,\omega_4=\frac{5^n+3\cdot\Gamma_{5,\,n}-43}{24},\,\,\omega_5=0\right];$$

\noindent ({\rm iii}) when $n \equiv 0\,\,({\rm mod}\,\,4)$, $\mathbb{S}$ is given by $$\left[\omega_0=\frac{3\cdot 5^n+\Gamma_{5,\,n}-1}{8},\,\omega_1=\frac{5^n+2}{3},\,\omega_2=\frac{5^n-\Gamma_{5,\,n}-3}{4},\,\omega_3=2,\,\omega_4=\frac{5^n+3\cdot\Gamma_{5,\,n}-91}{24},\,\omega_5=2\right].$$

\end{corollary}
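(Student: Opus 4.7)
The plan is to specialize Theorem \ref{spectrum} to $p=5$, so all that is really needed is to compute $\omega_5,\omega_3,\omega_2$ and $M$ for $p=5$ and then read off $\omega_0,\omega_1,\omega_4$ from the linear system (\ref{dfequation}). The starting point is a short table of quadratic characters in $\gf_{5^n}$: since the squares of $\gf_5$ are $\{1,4\}$, one has $\eta_5(-1)=1$ while $\eta_5(2)=\eta_5(3)=-1$. Lifting to $\gf_{5^n}$ gives $\eta(-1)=\eta(6)=1$ and $\eta(2)=\eta(-3)=\eta(-7)=\eta(-2)=(-1)^n$, which I would tabulate once and reuse.

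For $\omega_5$ (Theorem \ref{w5}) the conditions $\eta(2)=\eta(-7)=1$ already force $n$ to be even. The remaining value $\eta(-1+2\sqrt{2})$ I would handle via Remark \ref{remark of T1}: because $\eta_5(2)=-1$, the element $\sqrt{2}$ sits in $\gf_{25}\setminus\gf_5$, and one has to decide whether $-1+2\sqrt{2}$ is a square in $\gf_{25}$. Setting $(\alpha+\beta\sqrt{2})^2=-1+2\sqrt{2}$ and eliminating $\beta$ reduces to $u^2-4u+2=0$ over $\gf_5$ with discriminant $8$, which is a nonsquare; hence $-1+2\sqrt{2}$ is not a square in $\gf_{25}$. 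The remark then yields $\eta(-1+2\sqrt{2})=1$ iff $4\mid n$, giving $\omega_5=2$ exactly when $n\equiv0\pmod 4$ and $\omega_5=0$ otherwise. For $\omega_3$ (Theorem \ref{w3}) the condition ${\rm C}_1$ asks for $\eta(2)=1$ together with $\eta(-7)=-1$, which is impossible for $p=5$ because the two signs always coincide; the condition ${\rm C}_2$ reads $\eta(-3)=\eta(6)=1$, i.e.\ $n$ even. Therefore $\omega_3=2$ for even $n$ and $\omega_3=0$ otherwise.

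For $\omega_2$ I would observe that among the five branches of Theorem \ref{w2} only the generic one ``$\omega_2=A$'' applies to $p=5$, since $p\notin\{3,7\}$ and the combination $\eta(2)=1$, $\eta(-7)=-1$ never occurs. Inserting $\eta(-1)=1$, $\eta(-3)=(-1)^n$, and the identity $\lambda_{1,5^n}=\Gamma_{5,n}-\eta(-3)$ from Theorem \ref{thm gama} into the formula for $A$ causes the $\eta(-3)$ terms to cancel, leaving the clean expression $\omega_2=(5^n-\Gamma_{5,n}-3)/4$ announced in the corollary.

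Finally I would compute $M$ from Theorem \ref{L4}, using $\lambda_{2,5^n}=\Gamma_{5,n}-1$ and splitting according to $n\bmod 4$: Lemma \ref{Lem-g1} applied with $p=5$ gives $T_1=0$ for $n$ odd and for $n\equiv 2\pmod 4$, and $T_1=4$ for $n\equiv 0\pmod 4$ (the last case being precisely when $\omega_5=2$). After collecting constants this produces three explicit expressions $M=1+(5^n-1)[3\cdot 5^n+\Gamma_{5,n}+c]$ with $c\in\{-6,-10,+6\}$, and substitution into (\ref{dfequation}) together with the already-computed $\omega_2,\omega_3,\omega_5$ yields the stated formulas for $\omega_0,\omega_1,\omega_4$. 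I expect the only genuinely delicate step to be the determination of $\eta(-1+2\sqrt 2)$ in $\gf_{5^n}$ above (which decides the $n\equiv 0\pmod 4$ branch); the rest is bookkeeping of three cases on $n\bmod 4$ and a check that the arithmetic in (\ref{dfequation}) comes out to integer values, which serves as a convenient sanity test.
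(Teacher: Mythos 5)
Your proposal is correct and follows essentially the same route as the paper: specialize Theorems \ref{w5}, \ref{w3}, \ref{w2} and \ref{L4} to $p=5$ using the quadratic characters of $2,-3,-7$ in $\gf_{5^n}$, resolve $\eta(-1+2\sqrt{2})$ to split the even case into $n\equiv 2$ and $n\equiv 0 \pmod 4$, and feed $\omega_5,\omega_3,\omega_2,M$ into the linear system (\ref{dfequation}). The only cosmetic difference is that you decide the squareness of $-1+2\sqrt{2}$ by a direct computation in $\gf_{25}$, whereas the paper checks that $(x^2+1)^2-8$ is irreducible over $\gf_5$; both are instances of Remark \ref{remark of T1} and give the same conclusion.
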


\begin{proof}If $p=5$, then $a=\Gamma_{5,\,1}=2$ and the explicit formula for $\Gamma_{5,\,n}$ follows from Theorem \ref{thm gama}. Next we consider the following two cases:

  {\it Case 1: } $n$ is odd. Then, the element $2$ is a nonsquare in $\gf_{5^n}$  since it is a nonsquare in $\gf_{5}$. Thus, we have $\omega_5=0$,  $\omega_3=0$ and  $\omega_2=\frac{5^n-\Gamma_{5,\,n}-3}{4}$ according to Theorems \ref{w5}, \ref{w3} and \ref{w2}, respectively. Furthermore, we have $T_1=0$ by Lemma \ref{Lem-g1} and $M=5^n+(5^n-1)(3\cdot 5^n+\Gamma_{5,\,n}-7)$ by Theorems \ref{L4} and \ref{thm gama}. Substituting $\omega_5$, $\omega_3$, $\omega_2$ and $M$ into Theorem \ref{spectrum}, we obtain the desired result.

  {\it Case 2: } $n$ is even. Then, the elements $\pm 2$ are squares in $\gf_{5^n}$. One needs to decide whether $-1\pm2\sqrt{2}$ are squares in $\gf_{5^n}$ or not. Note that $-1+2\sqrt{2}$ (resp. $-1-2\sqrt{2}$ ) is a square in $\gf_{5^n}$
  if and only if $(x^2+1)^2=8$ has a solution in $\gf_{5^n}$, while the associated polynomial $(x^2+1)^2-8$ is an irreducible polynomial over $\gf_{5}$. Thus, $-1+2\sqrt{2}$ (resp. $-1-2\sqrt{2}$ ) is a square in $\gf_{5^n}$ if and only if $n \equiv 0\,\,({\rm mod}\,\,4)$. As we have done in Case 1, the desired results then follows from Theorem \ref{spectrum}.
\end{proof}

Similarly, for $p=7$, the differential spectrum of the function $x^{7^n-3}$ over $\gf_{7^n}$ can be presented as follows.

\begin{corollary}\label{corop7}
The power mapping $x^{7^n-3}$ over $\gf_{7^n}$ is differentially $4$-uniform and  its  differential spectrum $\mathbb{S}$ is given as follows:

\noindent ({\rm i}) $\mathbb{S}=\left[\omega_0=\frac{3\cdot7^n-5}{8},\,\,\omega_{1}=\frac{7^n+2}{3},\,\,\omega_2=\frac{7^n+1}{4},\,\,\omega_{3}=0,\,\,\omega_{4}=\frac{7^n-7}{24}\right] $ if $n$ is odd;

\noindent ({\rm ii}) $\mathbb{S}=\left[\omega_0=\frac{3\cdot7^n-2(-7)^{n/2}-1}{8},\,\,\omega_{1}=\frac{7^n+2}{3},\,\,\omega_2=\frac{7^n+2(-7)^{n/2}-3}{4},\,\,\omega_{3}=0,\,\,\omega_{4}=\frac{7^n-6(-7)^{n/2}+5}{24}\right] $ if $n$ is even.

\end{corollary}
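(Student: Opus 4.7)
The plan is to apply Theorem \ref{spectrum}, which reduces the full spectrum $\mathbb{S}=[\omega_0,\omega_1,\omega_2,\omega_3,\omega_4,\omega_5]$ to determining $\omega_5$, $\omega_3$, $\omega_2$ and the auxiliary count $M$, and then to solve the linear system (\ref{dfequation}). The specialization $p=7$ simplifies the arithmetic inputs considerably: in $\gf_7$ one has $2\equiv 3^2$ and $-3\equiv 2^2$, so $\eta(2)=\eta(-3)=1$ in $\gf_{7^n}$; since $-1$ is a nonsquare in $\gf_7$, $\eta(-1)=(-1)^n$; and, most importantly, $\eta(-7)=\eta(0)=0$.

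First I would read off $\omega_5=0$ from Theorem \ref{w5}, since the required condition $\eta(-7)=1$ fails. Similarly, the prefactor $\eta(7)^2$ in the expression (\ref{3:o3}) of Theorem \ref{w3} vanishes, so $\omega_3=0$. Then $\omega_2$ is obtained from the two $p=7$ branches of Theorem \ref{w2}: $\omega_2=A+2$ for $n$ odd and $\omega_2=A$ for $n$ even, where $A=\tfrac14\bigl(7^n-5-\lambda_{1,7^n}-\eta(-3)+2\eta(-1)\bigr)$. The input $\lambda_{1,7^n}=\Gamma_{7,n}-1$ comes from Theorem \ref{thm gama}, while Example \ref{examp1} supplies the closed form $\Gamma_{7,n}=-\bigl(1+(-1)^n\bigr)\sqrt{-7}^{\,n}$; this yields $\Gamma_{7,n}=0$ for odd $n$ and $\Gamma_{7,n}=-2(-7)^{n/2}$ for even $n$. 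For $M$, I would apply Theorem \ref{L4} with $\lambda_{2,7^n}=\Gamma_{7,n}-1$, $\eta(-1)=(-1)^n$, $\eta(-3)=1$, and $T_1$ read off from Lemma \ref{Lem-g1} (namely $T_1=1$ for odd $n$ and $T_1=3$ for even $n$).

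The final step is substitution into (\ref{dfequation}) and simplification. The key identity that makes the denominators cancel is that both $M-2\cdot 7^{2n}+7^n$ and $-M+5\cdot 7^{2n}-4\cdot 7^n$ are divisible by $7^n-1$; concretely, for $n$ odd one finds $M=1+(7^n-1)(3\cdot 7^n-2)$, whence $M-2\cdot 7^{2n}+7^n=(7^n-1)(7^n-3)$, whereas for $n$ even the analogous factorization $M-2\cdot 7^{2n}+7^n=(7^n-1)\bigl(7^n+1-2(-7)^{n/2}\bigr)$ appears, and the claimed expressions for $\omega_0$, $\omega_1$, $\omega_4$ drop out by routine arithmetic. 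I anticipate no real obstacle; the only subtle point worth verifying is that the proof of Lemma \ref{A-and-B}, whose second half implicitly uses $\eta(-7)\ne 0$ when handling $\pm 1\in\mathcal{B}$, is consistent here: for $p=7$ one has $T_{\pm 1}\in\{1,3\}$, so $\pm 1\notin\mathcal{B}$, which is precisely why Theorem \ref{w2} carries the separate $+2$ adjustment in the odd-$n$ branch corresponding to $N(\pm 1)=1+T_{\pm 1}=2$.
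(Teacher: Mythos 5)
Your proposal is correct and follows exactly the route the paper intends: the paper gives no explicit proof of Corollary \ref{corop7} (it is stated as following ``similarly'' to Corollary \ref{p=5ds}), and your substitution of $\eta(2)=\eta(-3)=1$, $\eta(-1)=(-1)^n$, $\eta(-7)=0$, $\Gamma_{7,n}=-(1+(-1)^n)\sqrt{-7}^{\,n}$, $T_1\in\{1,3\}$ into Theorems \ref{w5}, \ref{w3}, \ref{w2}, \ref{L4} and then (\ref{dfequation}) is precisely that argument, with all intermediate values ($\omega_2$, $M$, and the factorizations of $M-2\cdot 7^{2n}+7^n$) checking out. The closing observation reconciling the $p=7$ branch of Theorem \ref{w2} with Lemma \ref{A-and-B} is a correct and welcome sanity check.
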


For other given primes $p$, one can also obtain similar results as Corollaries \ref{p=3ds}, \ref{p=5ds} and \ref{corop7} by Theorem \ref{spectrum}. Next we provide some numerical experiments to verify our results in previous theorems.
\begin{example}
Let $p=5$, $n=4$, $d=p^n-3=622$ and $\eta$ be the quadratic character of $\gf_{5^4}$. Then, one has $\eta(2)=\eta(-1\pm 2\sqrt{2})=1$, and $\eta(-3)=\eta(6)=1$. Thus, by Theorems \ref{w5} and \ref{w3}, we have $\omega_5=2\,\,{\rm and}\,\,\omega_3=2.$
By Theorem \ref{thm gama}, we get $\Gamma_{5,4}=14$ and $\lambda_{1,5^4}=13$. Then, by Theorem \ref{w2}, we obtain $\omega_2=152.$ By Lemma \ref{Lem-g1} we have $T_1=4$ and by Theorem \ref{L4} one gets $M=1182481$.
By Theorem \ref{spectrum}, we get $\omega_0=236,\,\,\omega_1=209\,\,{\rm and}\,\,\omega_4=24.$

The result of the above computation can also be obtained directly by Corollary \ref{p=5ds}, and it is in accordance with the differential spectrum of the mapping $x^{622}$ over $\gf_{5^4}$ calculated directly by Magma, which is
$$\left[\omega_0=236,\,\,\omega_1=209,\,\,\omega_2=152,\,\,\omega_3=2,
\,\,\omega_4=24,\,\,\omega_5=2\right].$$

\end{example}

\begin{example}
Let $p=5$, $n=5$, $d=p^n-3=3122$ and $\eta$ be the quadratic character of $\gf_{5^5}$. Then, one has $\eta(2)=\eta(-3)=-1$. Thus, by Theorems \ref{w5} and \ref{w3}, we have $\omega_5=0\,\,{\rm and}\,\,\omega_3=0.$
We get $\Gamma_{5,\,5}=82$ and $\lambda_{1,5^5}=83$ by Theorem \ref{thm gama}. Then, by Theorem \ref{w2}, we obtain $\omega_2=760.$ By Lemma \ref{Lem-g1} we have $T_1=0$ and by Theorem \ref{L4} one gets $M=29524925$.
By Theorem \ref{spectrum}, we get
$\omega_0=1180,\,\,\omega_1=1045\,\,{\rm and}\,\,\omega_4=140.$

From Corollary \ref{p=5ds}, we can get the same result directly. The above result is also in accordance with the numerical result obtained from computer experiments, which is
$$\left[\omega_0=1180,\,\,\omega_1=1045,\,\,\omega_2=760,\,\,\omega_3=0,\,\,\omega_4=140\right].$$

\end{example}

\begin{example}
Let $p=7$, $n=4$, $d=p^n-3=2398$. By Theorems \ref{w5} and \ref{w3}, we have $\omega_5=0\,\,{\rm and}\,\,\omega_3=0.$
By Theorem \ref{thm gama}  and Remark \ref{remaka=0}, we get $\lambda_{1,7^4}=-99$. Then, by Theorem \ref{w2}, we obtain $\omega_2=624.$ By Lemma \ref{Lem-g1} we have $T_1=3$ and by Theorem \ref{L4} one gets $M=17056801 $.
By Theorem \ref{spectrum}, we get
$\omega_0=888,\,\,\omega_1=801\,\,{\rm and}\,\,\omega_4=88.$

The above result can also be obtained directly by Corollary \ref{corop7}, and it coincides with the numerical result computed by Magma, which is
$$[\omega_0=888,\,\,\omega_1=801,\,\,\omega_2=624,
\,\,\omega_3=0,\,\,\omega_4=88].$$

\end{example}

\section{concluding remarks}\label{sec5}
In this paper, we determine the differential spectrum of power function $x^{p^n-3}$ over $\gf_{p^n}$ for all primes $p\geq3$ with a unified approach. It is interesting that the differential spectrum of $x^{p^n-3}$ has a close connection with the quadratic character sums $\lambda_{1,p^n}$ defined as in (\ref{2:f1}) and $\lambda_{2,p^n}$  in (\ref{2:f2}). When $p\ge 5$, these two quadratic character sums are all related to the quadratic character sum  $\Gamma_{p,\,n}$, which can be evaluated by the theory of elliptic curves over finite fields. As a result, the differential spectrum of $x^{p^n-3}$ over $\gf_{p^n}$ is completely determined in the sense that for any given odd prime $p$, all its coordinates can be expressed explicitly in terms of $n$. Our result resolves a problem that is left open for twenty years, and includes a recent result in \cite{XZLH} as a special case.

%\section{Acknowledgements}


\begin{thebibliography}{99}





\bibitem{BiSh} E. Biham and A. Shamir, ``Differential cryptanalysis of DES-like cryptosystems," \emph{J. Cryptology}, vol. 4, no. 1, pp. 3-72, 1991.

\bibitem{BCC} C. Blondeau, A. Canteaut, and P. Charpin, ``Differential properties of power functions," \emph{Int. J. Inf. Coding Theory}, vol. 1, no. 2, pp. 149-170, 2010.

\bibitem{BCC2} C. Blondeau, A. Canteaut, and P. Charpin, ``Differential properties of $x \mapsto x^{2^t-1}$," \emph{IEEE Trans. Inf. Theory}, vol. 57, no. 12, pp. 8127-8137, 2011.

\bibitem{BP} C. Blondeau and L. Perrin, ``More differentially $6$-uniform power functions,"  \emph{Des. Codes Cryptogr.}, vol. 73, no. 2, pp. 487-505, 2014.

\bibitem{Lilya} L. Budaghyan, M. Calderini, C. Carlet, D. Davidova, and N. S. Kaleyski, ``On two fundamental problems on APN power functions," eprint.iacr.org/2020/1359.pdf.




\bibitem{Carlet}C. Carlet, P. Charpin, and V. Zinoviev, ``Codes, bent functions and permutations suitable
for DES-like cryptosystems,'' \textit{Des. Codes Cryptogr.}, vol. 15, pp. 125-156, 1998.




\bibitem{CarletIT18}C. Carlet, ``Characterizations of the differential uniformity of vectorial functions by the Walsh transform," \emph{IEEE Trans. Inf. Theory},  vol. 64, no. 9, pp. 6443-6453, 2018.




\bibitem{C-PAMC} P. Charpin and J. Peng, ``Differential uniformity and the associated codes of cryptographic functions," \emph{Advances in Mathematics of Communications}, vol. 13, no. 4, pp. 579-600, 2019.




\bibitem{Charpinpeng19}P. Charpin and J. Peng,
``New links between nonlinearity and differential uniformity," \emph{Finite Fields Appl.}, vol. 56, pp. 188-208, 2019.




\bibitem{CHNC} S.-T. Choi, S. Hong, J.-S. No, and H. Chung, ``Differential spectrum of some power functions in odd prime characteristic," \emph{Finite Fields Appl.}, vol. 21, pp. 11-29, 2013.



 \bibitem{Dobbertin2001} H. Dobbertin, T. Helleseth, P. V. Kumar, and H. Martinsen, ``Ternary $m$-sequences with three-valued cross-correlation function:  New decimations of Welch and Niho type,"  {\it IEEE Trans. Inf. Theory}, vol. 47, no. 4, pp. 1473-1481, 2001.

\bibitem{HRS} T. Helleseth, C. Rong, and D. Sandberg, ``New families of almost perfect nonlinear power mappings," \emph{IEEE Trans. Inf. Theory}, vol. 45. no. 2, pp. 475-485, 1999.

\bibitem{Li} N. Li, Y. Wu, X. Zeng, and X. Tang, ``On the differential spectrum of a class of power functions over finite fields," arXiv:2012.04316, 2020.


\bibitem{FF} R. Lidl and H. Niederreiter, \emph{Finite Fields,} Encyclopedia of Mathematics and Its Applications, vol. 20. Cambridge U.K: Cambridge University Press, 1997.



\bibitem{Dicksonp}  R. Lidl, G. L. Mullen, and G. Turnwald, \emph{ Dickson Polynomials,} Pitman Monographs in Pure and Applied Mathematics, vol. 65. New York: John Wiley $\&$ Sons, 1993.

\bibitem{Lei2020} L. Lei, W. Ren, and C. Fan, ``The differential spectrum of a class of power functions over finite fields," \emph{ Advances in Mathematics of Communications}, doi: 10.3934/amc.2020080, 2020.


\bibitem{Matsui} M. Matsui, ``Linear cryptanalysis method for DES cipher," in \emph{Advances in Cryptology-EUROCRYPT'93 (Lecture Notes in Computer Science)}, vol. 765, T. Helleseth Eds. Berlin, Germany: Springer-Verlag, 1994, pp. 386-397.

\bibitem{Nyberg93}K. Nyberg, ``Differentially uniform mappings for cryptography," in \emph{Advances in cryptology-EUROCRYPT'93 (Lecture Notes in Computer Science)}, vol. 765, T. Helleseth Eds. Berlin, Germany: Springer-Verlag, 1994, pp. 55-64.

\bibitem{SilveEC} J. H. Silverman, \emph{ The Arithmetic of Elliptic Curves}, Second Edition. Heidelberg: Springer, 2009.



\bibitem{tangding} C. Tang, C. Ding, and M. Xiong, "Codes, differentially $\delta$-uniform functions, and  $t$-designs," \emph{IEEE Trans. Inf. Theory}, vol. 66, no. 6, pp. 3691-3703, 2020.


\bibitem{XZLH} Y. Xia, X. Zhang, C. Li, and T. Helleseth, ``The differential spectrum of a ternary power mapping," \emph{Finite Fields Appl.}, vol. 64, 2020.


\bibitem{XY} M. Xiong and H. Yan, ``A note on the differential spectrum of a differentially $4$-uniform power function," \emph{Finite Fields Appl.}, vol. 48, pp. 117-125, 2017.

\bibitem{XYY} M. Xiong, H. Yan, and P. Yuan, ``On a conjecture of differentially $8$-uniform power functions," \emph{Des. Codes Cryptogr.}, vol. 86, no. 8, pp. 1601-1621, 2018.

\bibitem{Yan} H. Yan, Z. Zhou, J. Wen, J. Weng, T. Helleseth, and Q. Wang, ``Differential spectrum of Kasami power permutations over odd characteristic finite fields," \emph{IEEE Trans. Inf. Theory}, vol. 65, no. 10, pp. 6819-6826, 2019.




\end{thebibliography}
\end{document}